\newtheorem{Thm}{Theorem}
\newtheorem{Cor}{Corollary}
\newtheorem{Lem}[Thm]{Lemma}
\newtheorem{definition}{Definition}
\theoremstyle{definition}
\newtheorem{Rmk}{Remark}
\newcommand{\bra}[1]{{\left\langle #1 \right|}}
\newcommand{\ket}[1]{{\left| #1 \right\rangle}}
\newcommand{\C}{\mbox{$\mathbb C$}}
\newcommand{\R}{\mbox{$\mathbb R$}}
\newcommand{\T}{\mbox{$\mathrm{tr}$}}
\begin{document}
\title{Polygamy of Distributed Entanglement}

\author{Francesco Buscemi}
\affiliation{Statistical Laboratory, DPMMS, University of Cambridge,
  Wilberforce Road, CB3 0WB, UK }
\author{Gilad Gour}
\affiliation{
Institute for Quantum Information Science, University of Calgary, Alberta T2N 1N4, Canada
}
\affiliation{
Department of Mathematics and Statistics, University of Calgary
, Alberta T2N 1N4, Canada
}
\author{Jeong San Kim}\email{jkim@qis.ucalgary.ca}
\affiliation{
 Institute for Quantum Information Science,
 University of Calgary, Alberta T2N 1N4, Canada
}

\date{\today}
%
\begin{abstract}
  While quantum entanglement is known to be monogamous (i.e. shared
  entanglement is restricted in multi-partite settings), here we show
  that distributed entanglement (or the potential for entanglement) is
  by nature polygamous. By establishing the concept of one-way
  unlocalizable entanglement (UE) and investigating its properties, we
  provide a polygamy inequality of distributed entanglement in
  tripartite quantum systems of arbitrary dimension. We also provide a
  polygamy inequality in multi-qubit systems, and several trade offs
  between UE and other correlation measures.
\end{abstract}

\pacs{
03.67.-a, 
03.67.Hk, 
03.65.Ud, 
}
\maketitle

\section{Introduction}
Quantum entanglement is a non-local quantum correlation providing
a lot of useful applications in the field of quantum
communications and computations such as quantum teleportation and
quantum key distribution~\cite{tele, qkd1, qkd2}. This important
role of quantum entanglement has stimulated intensive study in
both way of its quantification and qualification.

One of the essential differences of quantum correlations (especially,
quantum entanglement) from other classical ones is that it cannot be
freely shared among the parties in multipartite quantum systems. In
particular, a pair of components that are maximally entangled cannot
share entanglement~\cite{CKW,ov} nor classical correlations~\cite{KW}
with any part of the rest of the system, hence the term {\em Monogamy
  of Entanglement}~(MoE)~\cite{T04}. Monogamy of entanglement was
shown to have a complete mathematical characterization for multi-qubit
systems~\cite{ov} using a certain entanglement measures, the
concurrence~\cite{ww}.

Whereas MoE shows the restricted sharability of multi-party quantum
entanglement, the distribution of entanglement, or {\em Entanglement
  of Assistance}~(EoA)~\cite{d, cohen} in multipartite quantum systems
was shown to have a dually monogamous (or {\em Polygamous})
property. Using {\em Concurrence of Assistance}~(CoA)~\cite{lve} as
the measure of distributed entanglement, it was also shown that
whereas monogamy of entanglement inequalities provide an upper bound
for bipartite sharability of entanglement in a multipartite system,
the same quantity provides a lower bound for distribution of bipartite
entanglement in a multipartite system~\cite{gbs}. In this paper, by
introducing the concept of {\em One-way Unlocalizable
  Entanglement}~(UE), we provide a polygamy inequality of entanglement
in tripartite quantum systems of arbitrary dimension using entropic
entanglement measure. Based on the functional relation between
concurrence and entropic measure in two-qubit systems, we provide a
polygamy inequality in multi-qubit systems. We also provide several
trade offs between UE and other correlations such as EoA, and
localizable entanglement.

The paper is organized as follows. In Sec.~\ref{sec: qf}, we provide
the definition of UE, and its basic properties. In Sec.~\ref{sec:
  3dual}, we provide a polygamy inequality of distributed entanglement
in tripartite quantum systems in terms of entropy and EoA. In
Sec.~\ref{sec: ndual}, we generalize the polygamy inequality of
entanglement into multi-qubit systems, and provide a more tight
polygamy inequality for three-qubit systems. In Sec.~\ref{sec: other},
we provide several trade offs between UE and other correlations, and
we summarize our results, in Sec.~\ref{sec: Conclusion}.

\section{One-Way Unlocalizable Entanglement}
\label{sec: qf}
\subsection{Definition}
\label{subsec: def}
For any bipartite quantum state $\rho_{AB}$,
its one-way distillable common randomness~\cite{DV}
is defined as
\begin{equation}
  C_D^\leftarrow(\rho_{AB})=\lim_{n\rightarrow \infty}
                              \frac{1}{n}I^\leftarrow(\rho_{AB}^{\otimes n}),
\end{equation}
where, the function $I^\leftarrow(\rho_{AB})
$~\cite{Henderson-Vedral01} is
\begin{equation}\begin{split}
  I^\leftarrow(\rho_{AB}) &=  \max_{\{M_x\}} \left[S(\rho_A)-\sum_x p_x S(\rho^x_A)\right],
\end{split}
\label{HVI}
\end{equation}
and where the maximum is taken over all the measurements $\{M_x\}$
applied on system $B$.  Here, $S(\rho_A)$ is the von Neumann entropy
of $\rho_A\equiv \T_B(\rho_{AB})$, $p_x\equiv \T[(I_A\otimes
M_x)\rho_{AB}]$ is the probability of the outcome $x$, and
$\rho^x_A\equiv \T_B[(I_A\otimes {M_x})\rho_{AB}]/p_x$ is the state of
system $A$ when the outcome was $x$.

For a tripartite pure state $\ket{\psi}_{ABC}$ with
$\rho_{A}=\T_{BC}\ket{\psi}_{ABC}\bra{\psi}$, $\rho_{AB}
=\T_{C}\ket{\psi}_{ABC}\bra{\psi}$, and $\rho_{AC}
=\T_{B}\ket{\psi}_{ABC}\bra{\psi}$, it was shown that~\cite{KW}
\begin{align}
    S(\rho_A)&=I^\leftarrow(\rho_{AB})+E_f(\rho_{AC}).
    \label{KWmain1}
\end{align}
Here, $E_f(\rho_{AC})$ is the {\em Entanglement of
Formation}~(EoF) of $\rho_{AC}$ defined as~\cite{bdsw}
\begin{equation}
E_f(\rho_{AC})=\min \sum_{i}p_i S(\rho^{i}_{A}),
\label{eof}
\end{equation}
where the minimization is taken over all pure state decomposition of
$\rho_{AC}$ such that,
\begin{equation}
\rho_{AC}=\sum_{i} p_i |\phi^i\rangle_{AC}\langle\phi^i|,
\label{decomp}
\end{equation}
with $\T_{C}|\phi^i\rangle_{AC}\langle\phi^i|=\rho^{i}_{A}$.


As a dual quantity to EoF, EoA is defined by the maximum average
entanglement of $\rho_{AC}$,
\begin{equation}
E_a(\rho_{AC})=\max \sum_{i}p_i S(\rho^{i}_{A}),
\label{EoA}
\end{equation}
over all possible pure state decompositions of $\rho_{AC}$.

\begin{definition} The \emph{one-way unlocalizable entanglement}~(UE)
  of a bipartite state $\rho_{AB}$ is defined as follows:
\begin{equation}
  E_u^{\leftarrow}(\rho_{AB}) := S(\rho_A)-E_a(\rho_{AC}),
  \label{puremain1}
\end{equation}
where $\rho_{AC}$ denotes the reduced state of a purification
$\ket{\psi}_{ABC}$ of $\rho_{AB}$.
\end{definition}

The one-way unlocalizable entanglement can be equivalently
characterized as follows:

\begin{Lem}
  For any given bipartite state $\rho_{AB}$, its one-way unlocalizable
  entanglement is given by
\begin{equation}
  E_u^{\leftarrow}(\rho_{AB})=\min_{\{M_x\}}\left[S(\rho_A)-\sum_xp_xS(\rho_A^x)\right],
\label{fragility}
\end{equation}
where the minimum is taken over all possible rank-1 measurements
$\{M_x\}$ applied on subsystem $B$.
\label{Lem: puremain1}
\end{Lem}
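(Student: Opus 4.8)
The plan is to exploit that $S(\rho_A)$ is independent of the measurement $\{M_x\}$, so that proving \eqref{fragility} is equivalent to establishing
\[
  E_a(\rho_{AC})=\max_{\{M_x\}}\sum_x p_x S(\rho_A^x),
\]
with the maximum taken over rank-$1$ measurements on $B$. Everything then hinges on the correspondence between rank-$1$ measurements on the purifying system and pure-state decompositions of the complementary reduced state.

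First I would fix a purification $\ket{\psi}_{ABC}$ of $\rho_{AB}$ and note that, since $\rho_{AC}=\T_B\ket{\psi}_{ABC}\bra{\psi}$, the system $B$ is itself a purifying system for $\rho_{AC}$. The central step is the Hughston--Jozsa--Wootters (HJW) theorem (Schr\"odinger's mixture theorem): a rank-$1$ measurement $\{M_x=\ket{b_x}\bra{b_x}\}$ on $B$ collapses $AC$ into the pure state $\ket{\phi^x}_{AC}\propto(\bra{b_x}_B)\ket{\psi}_{ABC}$ with probability $p_x$, and conversely every pure-state decomposition $\rho_{AC}=\sum_x p_x\ket{\phi^x}_{AC}\bra{\phi^x}$ as in \eqref{decomp} is realized by some rank-$1$ measurement on $B$. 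It is precisely the rank-$1$ restriction that guarantees each post-measurement state on $AC$ is pure, so that the induced decompositions are exactly those entering the definition \eqref{EoA} of $E_a$.

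Next I would match the conditional entropies. Because a measurement on $B$ commutes with the partial trace over $C$, the state $\rho_A^x=\T_B[(I_A\otimes M_x)\rho_{AB}]/p_x$ appearing in \eqref{fragility} equals $\T_C\ket{\phi^x}_{AC}\bra{\phi^x}$. As $\ket{\phi^x}_{AC}$ is pure, $S(\rho_A^x)$ is just its entanglement entropy across the $A$--$C$ cut, whence $\sum_x p_x S(\rho_A^x)$ is the average entanglement of the decomposition $\{p_x,\ket{\phi^x}_{AC}\}$. Maximizing over all rank-$1$ measurements on $B$ therefore maximizes the average entanglement over all pure-state decompositions of $\rho_{AC}$, which is $E_a(\rho_{AC})$ by definition; substituting this back returns $\min_{\{M_x\}}[S(\rho_A)-\sum_x p_x S(\rho_A^x)]=S(\rho_A)-E_a(\rho_{AC})=E_u^{\leftarrow}(\rho_{AB})$, matching \eqref{puremain1}.

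The hard part will be a clean justification of both directions of the measurement--decomposition correspondence, and in particular the argument that the rank-$1$ constraint cannot be relaxed: allowing higher-rank outcomes leaves the conditional $AC$ states mixed and, by concavity of the von Neumann entropy, can only increase $\sum_x p_x S(\rho_A^x)$ — with the coarsest (trivial) measurement giving the value $S(\rho_A)$ — so the restriction to rank-$1$ measurements is exactly what pins the maximum to $E_a$ rather than trivializing it. A minor point to verify is that the value is independent of the chosen purification, which follows because any two purifications of $\rho_{AB}$ differ by an isometry on $C$ and $E_a$ is invariant under such isometries.
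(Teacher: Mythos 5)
Your proposal is correct and follows essentially the same route as the paper's own proof: both rewrite the minimization as $S(\rho_A)$ minus a maximization over rank-$1$ measurements and then invoke the correspondence (the HJW/Schr\"odinger mixture theorem) between rank-$1$ measurements on the purifying system $B$ and pure-state decompositions of $\rho_{AC}$, identifying that maximum with $E_a(\rho_{AC})$. The paper simply asserts this correspondence, whereas you flesh out the supporting details (purity of the post-measurement $AC$ states, the entropy matching, and purification independence), but the argument is the same.
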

\begin{proof}
Eq.~(\ref{fragility}) can be rewritten as
\begin{equation}
E_u^{\leftarrow}(\rho_{AB}) =  S(\rho_A)- \max_{\{M_x\}}\sum_x p_x S(\rho^x_A),
\label{unlocal2}
\end{equation}
where the maximum is taken over all possible rank-1 measurements
$\{M_x\}$ applied on system $B$.

Since $\ket{\psi}_{ABC}$ is a pure state, all possible pure state
decompositions of $\rho_{AC}$ can be realized by rank-1 measurements
of subsystem $B$, and conversely, any rank-1 measurement can be
induced from a pure state decomposition of $\rho_{AC}$. Thus, the
second term on the right hand side of Eq.~(\ref{unlocal2}) is the
maximum average entanglement over all possible pure state
decomposition of $\rho_{AC}$, which is the definition of
$E_a(\rho_{AC})$, and this completes the proof.
\end{proof}

By definition, the UE of $\rho_{AB}$ is the difference between
$S(\rho_A)$ and $E_a(\rho_{AC})$. Here, $S(\rho_A)$ quantifies the
entanglement of the pure state $\ket{\psi}_{A(BC)}$ with respect to
the bipartite cut $A$--$BC$, whereas $E_a(\rho_{AC})$ measures the
maximum average entanglement that can be localized on the subsystem
$AC$ with the assistance of $B$. The terminology used is then
clear. Figure~\ref{relation} graphically illustrates this separation.

\begin{figure}
 \includegraphics[width=6.5cm]{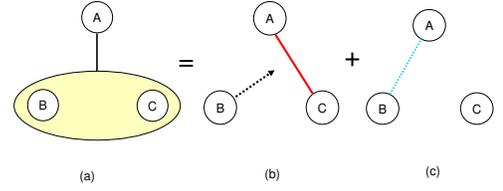}\\
  \caption{The entanglement of $\ket{\psi}_{A(BC)}$ ((a) in the figure)
   consists of two different parts: One is the localizable entanglement onto $AC$
   represented by $E_a(\rho_{AC})$ ((b) in the figure), and the other part
   is the unlocalizable entanglement represented by $E_u^{\leftarrow}(\rho_{AB})$ ((c) in the figure). }\label{relation}
\end{figure}


\subsection{Properties}
\label{subsec: prop}

\subsubsection{Subadditivity}
\label{subsub: subadditivity}
\begin{Lem}
For all bipartite states $\rho_{AB}$ and $\sigma_{A'B'}$,
\begin{equation}
E_u^{\leftarrow}(\rho_{AB}\otimes
\sigma_{A'B'})\leq E_u^{\leftarrow}(\rho_{AB})+E_u^{\leftarrow}(\sigma_{A'B'}),
\label{subadd}
\end{equation}
where
\begin{equation}
\begin{split}
E_u^{\leftarrow}(\rho_{AB}\otimes \sigma_{A'B'})= \min_{\{L_z\}}
 \left[S(\rho_A \otimes \sigma_{A'})-\sum_z r_z S(\tau^z_{AA'})\right],\\
\end{split}
\end{equation}
with $r_z=\T[(I_{AA'}\otimes L_{z})\rho_{AB}\otimes
\sigma_{A'B'}]$, $\tau^z_{AA'}=\T_{BB'}[(I_{AA'}\otimes
L_{z})\rho_{AB}\otimes \sigma_{A'B'}]/r_{z}$, and the minimum is
taken over all possible rank-1 measurements $\{L_z\}$ applied on
subsystem $BB'$. \label{lem: subadd}
\end{Lem}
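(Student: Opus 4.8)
The plan is to exploit the variational (minimization) characterization of UE established in Lemma~\ref{Lem: puremain1} and to obtain the bound by evaluating the objective on a judiciously chosen \emph{product} measurement. Concretely, let $\{M_x\}$ be a rank-1 measurement on $B$ that attains the minimum in Eq.~(\ref{fragility}) for $\rho_{AB}$, and let $\{N_y\}$ be one attaining the minimum for $\sigma_{A'B'}$. I would then feed the tensor-product measurement $\{M_x\otimes N_y\}$ on $BB'$ into the minimization defining $E_u^{\leftarrow}(\rho_{AB}\otimes\sigma_{A'B'})$. Since the latter is an infimum over \emph{all} rank-1 measurements of $BB'$, this particular admissible choice can only return a value greater than or equal to the true minimum. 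The whole argument therefore reduces to checking that the product measurement yields exactly the right-hand side of Eq.~(\ref{subadd}).

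First I would verify the two factorization facts that make the product measurement split cleanly. Because the global state is $\rho_{AB}\otimes\sigma_{A'B'}$, the outcome probabilities factorize as $r_{(x,y)}=p_x q_y$, where $p_x$ and $q_y$ are the probabilities of the individual measurements, and the conditional post-measurement states factorize as $\tau^{(x,y)}_{AA'}=\rho_A^x\otimes\sigma_{A'}^y$. Invoking additivity of the von Neumann entropy on product states, $S(\rho_A^x\otimes\sigma_{A'}^y)=S(\rho_A^x)+S(\sigma_{A'}^y)$ and $S(\rho_A\otimes\sigma_{A'})=S(\rho_A)+S(\sigma_{A'})$, the objective evaluated on $\{M_x\otimes N_y\}$ separates into two independent pieces.

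The computation then collapses using $\sum_x p_x=\sum_y q_y=1$: the term $\sum_{x,y}p_x q_y\,S(\rho_A^x)$ reduces to $\sum_x p_x S(\rho_A^x)$, and similarly for the $\sigma$ part. The value attained by the product measurement is thus
\[
\Big[S(\rho_A)-\sum_x p_x S(\rho_A^x)\Big]+\Big[S(\sigma_{A'})-\sum_y q_y S(\sigma_{A'}^y)\Big],
\]
which, by our choice of $\{M_x\}$ and $\{N_y\}$ as optimal, equals $E_u^{\leftarrow}(\rho_{AB})+E_u^{\leftarrow}(\sigma_{A'B'})$. Since the left-hand side of Eq.~(\ref{subadd}) is bounded above by this value, the inequality follows.

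I expect the only genuinely delicate point, as opposed to the essentially bookkeeping algebra above, to be the verification that a tensor product of two rank-1 measurements is itself a legitimate rank-1 measurement on the composite $BB'$, so that $\{M_x\otimes N_y\}$ is admissible in the minimization over $\{L_z\}$. This is immediate because the tensor product of rank-1 operators has rank one and completeness $\sum_{x,y}M_x\otimes N_y=I_B\otimes I_{B'}$ follows from completeness of each factor. It is nonetheless worth flagging that the restriction to rank-1 measurements in Lemma~\ref{Lem: puremain1} is precisely what keeps the product construction inside the feasible set, so no hidden relaxation of the variational problem enters the bound.
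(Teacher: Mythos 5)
Your proposal is correct and follows essentially the same route as the paper's own proof: choose the optimal rank-1 measurements $\{M_x\}$ and $\{N_y\}$ for each factor, evaluate the minimization defining $E_u^{\leftarrow}(\rho_{AB}\otimes\sigma_{A'B'})$ on the product measurement $\{M_x\otimes N_y\}$, and use additivity of the von Neumann entropy. Your explicit check that the product of rank-1 measurements is itself an admissible rank-1 measurement is a detail the paper leaves implicit, but the argument is otherwise identical.
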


\begin{proof}
Let $\{M_x \}$ and $\{N_y \}$ be the optimal rank-1 measurements
on subsystems $B$ and $B'$ for $E_u^{\leftarrow}(\rho_{AB})$ and
$E_u^{\leftarrow}(\sigma_{A'B'})$ respectively, then, we have

\begin{align}
E_u^{\leftarrow}&(\rho_{AB})+E_u^{\leftarrow}(\sigma_{A'B'})\nonumber\\
&=S(\rho_A)+S(\sigma_{A'})-\sum_{x}p_x S(\rho^{x}_{A})-\sum_{y}q_y S(\sigma^{y}_{A'})\nonumber\\
&=S(\rho_{A}\otimes \sigma_{A'})-\sum_{xy}p_x q_y S(\rho^x_{A}\otimes \sigma^y_{A'})\nonumber\\
&\geq E_u^{\leftarrow}(\rho_{AB}\otimes \sigma_{A'B'}),
\end{align}
where $p_x\rho^x_A=\T_{B}[(I_A\otimes M_{x})\rho_{AB}]$,
$q_y\sigma^y_{A'}=\T_{B'}[(I_{A'}\otimes N_{y})\sigma_{A'B'}]$,
and the second equality is due to the additivity of von Neumann
entropy and the definition of
$E_u^{\leftarrow}(\rho_{AB}\otimes \sigma_{A'B'})$.
\end{proof}

By Lemma~\ref{lem: subadd}, we can assure the existence of the
{\em regularized} UE
\begin{equation}
E_{u,\infty}^\leftarrow(\rho_{AB}):=\lim_{n\rightarrow \infty}\frac{E_u^{\leftarrow}(\rho_{AB}^{\otimes n})}{n},
\label{regular}
\end{equation}
which satisfies
\begin{equation}
E_{u,\infty}^\leftarrow(\rho_{AB})\leq E_u^{\leftarrow}(\rho_{AB}).
\label{regular2}
\end{equation}

\subsubsection{Simple Lower Bound}
\label{subsub: bounds}
\begin{Lem}
For any bipartite state $\rho_{AB}$,
\begin{equation}
E_u^{\leftarrow}(\rho_{AB})\geq \max\{I_{c}^{\leftarrow}(\rho_{AB}),~ 0 \},
\label{lower}
\end{equation}
where $I_{c}^{\leftarrow}(\rho_{AB}):=S(\rho_A)-S(\rho_{AB})$ is
the coherent information of $\rho_{AB}$. \label{lem: lower}
\end{Lem}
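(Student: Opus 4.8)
The plan is to prove the two lower bounds separately, and both will ultimately rest on the concavity of the von Neumann entropy. Working from the equivalent characterization \eqref{fragility}, it suffices to show that for \emph{every} rank-1 measurement $\{M_x\}$ on $B$ the bracketed quantity $S(\rho_A)-\sum_x p_x S(\rho_A^x)$ is bounded below by both $0$ and $I_c^\leftarrow(\rho_{AB})$; since $E_u^\leftarrow$ is the minimum over such measurements, the two bounds on the minimum follow at once.

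For the nonnegativity part, I would observe that a measurement on $B$ induces a decomposition $\rho_A=\sum_x p_x\rho_A^x$, because $\sum_x p_x\rho_A^x=\sum_x\T_B[(I_A\otimes M_x)\rho_{AB}]=\T_B(\rho_{AB})=\rho_A$. Concavity of the entropy then gives $S(\rho_A)=S\!\left(\sum_x p_x\rho_A^x\right)\ge\sum_x p_x S(\rho_A^x)$, so the bracket in \eqref{fragility} is nonnegative for each measurement, hence $E_u^\leftarrow(\rho_{AB})\ge 0$.

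For the coherent-information bound, the key is to show $\sum_x p_x S(\rho_A^x)\le S(\rho_{AB})$ for every rank-1 measurement, since this rearranges to $S(\rho_A)-\sum_x p_x S(\rho_A^x)\ge S(\rho_A)-S(\rho_{AB})=I_c^\leftarrow(\rho_{AB})$. I would pass to a purification $\ket{\psi}_{ABC}$ of $\rho_{AB}$, so that $S(\rho_{AB})=S(\rho_C)$. As established in the proof of Lemma~\ref{Lem: puremain1}, a rank-1 measurement $\{M_x\}$ on $B$ realizes a pure-state decomposition $\rho_{AC}=\sum_x p_x\ket{\phi^x}_{AC}\bra{\phi^x}$ whose marginals on $A$ are exactly the post-measurement states $\rho_A^x=\T_C\ket{\phi^x}_{AC}\bra{\phi^x}$. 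Writing $\rho_C^x=\T_A\ket{\phi^x}_{AC}\bra{\phi^x}$, purity of $\ket{\phi^x}_{AC}$ forces $S(\rho_A^x)=S(\rho_C^x)$, while $\{p_x,\rho_C^x\}$ is an ensemble averaging to $\sum_x p_x\rho_C^x=\T_A\rho_{AC}=\rho_C$.

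The step I expect to be the crux is this identification: recognizing that $S(\rho_A^x)=S(\rho_C^x)$ and that the $\rho_C^x$ form a genuine decomposition of $\rho_C$ is what converts an assistance-side quantity on $A$ into a concavity statement on $C$. Granting it, concavity yields $\sum_x p_x S(\rho_A^x)=\sum_x p_x S(\rho_C^x)\le S(\rho_C)=S(\rho_{AB})$, which is the desired inequality, and taking the minimum over $\{M_x\}$ gives $E_u^\leftarrow(\rho_{AB})\ge I_c^\leftarrow(\rho_{AB})$. Combining the two bounds establishes $E_u^\leftarrow(\rho_{AB})\ge\max\{I_c^\leftarrow(\rho_{AB}),\,0\}$.
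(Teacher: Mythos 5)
Your proof is correct and follows essentially the same route as the paper's: the paper bounds $E_a(\rho_{AC})\le\min\{S(\rho_A),S(\rho_C)\}$ (citing monotonicity of entanglement) and then uses purity, $S(\rho_C)=S(\rho_{AB})$, which is precisely the content of your two per-measurement concavity bounds combined with the measurement--decomposition correspondence of Lemma~\ref{Lem: puremain1}. The only difference is presentational: you derive the upper bound on the localizable part explicitly from concavity of the von Neumann entropy (on $A$ for nonnegativity, on $C$ for the coherent-information bound) rather than invoking it as a known property, making your version slightly more self-contained.
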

\begin{proof}
Let $\ket{\psi}_{ABC}$ be a purification of $\rho_{AB}$, then due
to the monotonicity of entanglement, we have
\begin{equation}
E_a(\rho_{AC})\leq \min\{S(\rho_A),~S(\rho_C)\},
\label{EoAupper}
\end{equation}
where $\rho_{AC}=\T_{B}\ket{\psi}_{ABC}\bra{\psi}$.

Thus, together with Lemma~\ref{Lem: puremain1}, we have
\begin{align}
E_u^{\leftarrow}(\rho_{AB})&=S(\rho_A)-E_a(\rho_{AC})\nonumber\\
&\geq \max\{S(\rho_A)-S(\rho_C),~0\}\nonumber\\
&=\max\{I_{c}^{\leftarrow}(\rho_{AB}),~ 0 \},
\end{align}
where the last equality is due to the purity of $\ket{\psi}_{ABC}$, that is,
$S(\rho_C)=S(\rho_{AB})$.
\end{proof}

Since $\ket{\psi}_{ABC}^{\otimes n}$ is a purification of both
$\rho_{AB}^{\otimes n}$ and $\rho_{BC}^{\otimes n}$, we have
\begin{equation}
  E_u^{\leftarrow}\left(\rho_{AB}^{\otimes n}\right)+E_a\left(\rho_{AC}^{\otimes n}\right)=nS(\rho_A).
\end{equation}
By taking the limit $n\to\infty$, and due to the
relation~\cite{Smolin-Ver-Win}
\begin{equation}
\lim_{n\to\infty}\frac{E_a\left(\rho_{AC}^{\otimes n}\right)}{n}=\min\{S(\rho_A),S(\rho_C)\},
\end{equation}
we have that
\begin{equation}
{E_{u,\infty}^{\leftarrow}}(\rho_{AB})=\max\{I_c^{\leftarrow}(\rho_{AB}),0\}.
\label{asymp}
\end{equation}

Eq.~(\ref{asymp}) implies that, in the asymptotic limit of many
copies, separable states do not exhibit quantumness in their
correlations, or their correlations are completely erasable. This
is a strong evidence that the distinction between separable and
entangled states is {\em operational} only in asymptotic sense,
since separable states can exhibit non-zero UE in finite
case.

\section{Polygamy of entanglement in tripartite quantum systems}
\label{sec: 3dual}
For any bipartite pure state $\ket \phi_{AB}$, its concurrence, $\mathcal{C}(\ket
\phi_{AB})$ is defined as~\cite{ww}
\begin{equation}
\mathcal{C}(\ket \phi_{AB})=\sqrt{2(1-\T\rho^2_A)},
\label{pure state concurrence}
\end{equation}
where $\rho_A=\T_B(\ket \phi_{AB}\bra \phi)$.
For any mixed state $\rho_{AB}$, its concurrence is defined via {\em convex-roof extension}, that is,
\begin{equation}
\mathcal{C}(\rho_{AB})=\min \sum_k p_k \mathcal{C}({\ket {\phi_k}}_{AB}),
\label{mixed state concurrence}
\end{equation}
where the minimum is taken over all possible pure state
decompositions, $\rho_{AB}=\sum_kp_k{\ket {\phi_k}}_{AB}\bra
\phi_k$.

As a dual value to concurrence, CoA~\cite{lve} of $\rho_{AB}$ is defined as
\begin{equation}
\mathcal{C}^a(\rho_{AB})=\max \sum_k p_k \mathcal{C}({\ket {\phi_k}}_{AB}),
\label{CoA}
\end{equation}
where the maximum is taken over all possible pure state decompositions of
$\rho_{AB}$.

By using concurrence and CoA as the quantification of bipartite entanglement, it was
shown that there exists a polygamy relation of
entanglement in multi-qubit systems~\cite{gbs}. More precisely,
for any pure state $\ket{\psi}_{A_1 \cdots A_n}$ in an $n$-qubit
system $\mathcal H_{A_1} \otimes \cdots \otimes \mathcal H_{A_n}$
where $\mathcal H_{A_i} \cong
\mathbb{C}^2$ for $i=1,\ldots,n$,
\begin{equation} \mathcal{C}_{A_1 (A_2 \cdots
A_n)}^2  \leq  (\mathcal{C}^a_{A_1 A_2})^2
+\cdots+(\mathcal{C}^a_{A_1 A_n})^2, \label{ndual}
\end{equation}
where $\mathcal{C}_{A_1 (A_2 \cdots A_n)}$ is the concurrence of
$\ket{\psi}_{A_1 \cdots A_n}$ with respect to the bipartite cut $A_1$
and $A_{2}\cdots A_{n}$, and $\mathcal{C}^a_{A_1 A_i}$ is the CoA of
$\rho_{A_1A_i}=\T_{A_2\cdots A_{i-1}A_{i+1}\cdots A_n}\left(\ket{\psi}_{A_1 \cdots A_n}\bra{\psi}\right)$ for $i=2,
\ldots ,n$.

In this section, we provide an analytic upper bound of UE
in Eq.~(\ref{fragility}), and derive a polygamy inequality of
entanglement in terms of von-Neumann entropy and EoA
for tripartite quantum systems of arbitrary dimension.

First, for an upper bound of UE, we have the following
theorem.

\begin{Thm}
For any bipartite state $\rho_{AB}$ in a bipartite quantum system
$\mathcal{H}_A \otimes \mathcal{H}_B$,
\begin{equation}
E_u^{\leftarrow}(\rho_{AB})\leq\frac{I(\rho_{AB})}{2}, \label{upper}
\end{equation}
where $I(\rho_{AB})=S(\rho_A)+S(\rho_B)-S(\rho_{AB})$ is the
mutual information of $\rho_{AB}$. \label{Thm: upper}
\end{Thm}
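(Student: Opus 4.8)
The plan is to reduce the inequality to a single statement about the entanglement of assistance. Fix a purification $\ket{\psi}_{ABC}$ of $\rho_{AB}$. By the definition~(\ref{puremain1}) we have $E_u^{\leftarrow}(\rho_{AB})=S(\rho_A)-E_a(\rho_{AC})$, while purity of $\ket{\psi}_{ABC}$ gives $S(\rho_C)=S(\rho_{AB})$ and $S(\rho_{AC})=S(\rho_B)$. Substituting these into the target~(\ref{upper}) and simplifying, the claim $E_u^{\leftarrow}(\rho_{AB})\le\tfrac12 I(\rho_{AB})$ becomes \emph{equivalent} to
\[
  E_a(\rho_{AC})\ \ge\ \tfrac12\,I(\rho_{AC}),
\]
with $I(\rho_{AC})=S(\rho_A)+S(\rho_C)-S(\rho_{AC})$. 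In words, the entanglement of assistance is never smaller than one half of the mutual information of $\rho_{AC}$. This is tight on pure states, where both sides equal $S(\rho_A)$, a useful consistency check on the constant $\tfrac12$.

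First I would recast this as a statement about decompositions. For any pure-state decomposition $\rho_{AC}=\sum_x p_x\ket{\phi^x}_{AC}\bra{\phi^x}$ one has $S(\rho_A^x)=S(\rho_C^x)$, so the average entanglement equals $\tfrac12\sum_x p_x I(\ket{\phi^x})$; hence the displayed inequality is equivalent to the existence of a decomposition whose average pure-state mutual information is at least $I(\rho_{AC})$. By Lemma~\ref{Lem: puremain1} such decompositions are exactly the rank-1 measurements $\{M_x\}$ on $B$, and the quantity to be controlled is the Holevo term $\chi_A=S(\rho_A)-\sum_x p_x S(\rho_A^x)$. Crucially, because $E_u^{\leftarrow}(\rho_{AB})$ is the \emph{minimum} of $\chi_A$ over measurements, it suffices to exhibit \emph{any} family of rank-1 measurements whose \emph{average} value of $\chi_A$ is at most $\tfrac12 I(\rho_{AB})$; the minimum is then automatically below this average.

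To build such a family I would measure $B$ in bases mutually unbiased to (in the qubit case, conjugate to) the eigenbasis of $\rho_B$, rather than in the eigenbasis itself. The mechanism is visible already for the classically correlated state $\rho_{AB}=\tfrac12(\ket{00}\bra{00}+\ket{11}\bra{11})$: measuring $B$ in the basis $\{\ket{0},\ket{1}\}$ gives $\chi_A=1$, which exceeds $\tfrac12 I(\rho_{AB})=\tfrac12$, whereas measuring in the conjugate basis $\{\ket{+},\ket{-}\}$ gives $\chi_A=0$, so the average of the two equals exactly $\tfrac12$. The task is then to show that averaging $\chi_A$ over a complete set of mutually unbiased bases (or a unitary two-design) on $B$ never exceeds $\tfrac12 I(\rho_{AB})$ in arbitrary dimension.

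The main obstacle is precisely this averaged entropic estimate. The difficulty is that $S(\rho_A^x)$ is a concave, nonlinear function of the random post-measurement state, so the average over bases does not collapse to a simple closed form; moreover the constant must emerge as exactly $\tfrac12$, which rules out crude bounds such as $\chi_A\le I(\rho_{AB})$ (valid for every single measurement by data processing, but a factor of two too weak). I therefore expect the decisive step to be an entropic-uncertainty-type relation for two complementary measurements on $B$, bounding $\sum_x p_x S(\rho_A^x)$ from below by $\tfrac12 I(\rho_{AC})$, which is where the genuine content, and the constant $\tfrac12$, reside. An alternative, if it can be verified, is to establish the dual relation $E_a(\rho_{AC})+E_f(\rho_{AC})\ge I(\rho_{AC})$ and combine it with $E_a\ge E_f$; this would deliver the factor $\tfrac12$ at once, though proving that dual inequality looks no easier than the averaged estimate.
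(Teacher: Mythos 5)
Your reduction of the theorem to $E_a(\rho_{AC})\ge\tfrac12 I(\rho_{AC})$ is correct, and your strategic skeleton --- measure $B$ in complementary bases, exploit that the minimum of $\chi_A$ over rank-1 measurements is at most an average, and aim for a two-basis estimate with constant $\tfrac12$ --- is in fact exactly the skeleton of the paper's proof: the paper takes the eigenbasis $\{\ket{e_i}\}$ of $\rho_B$ and its Fourier conjugate $\{|\tilde e_j\rangle\}$, merges them into the single rank-1 measurement of Eq.~(\ref{povm}) with weights $\tfrac12$, so that $E_u^{\leftarrow}(\rho_{AB})\le\tfrac12\bigl(\chi(\mathcal E_0)+\chi(\mathcal E_1)\bigr)$. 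But your proposal stops precisely where the real work begins. The decisive inequality, $\chi(\mathcal E_0)+\chi(\mathcal E_1)\le I(\rho_{AB})$ (Eq.~(\ref{ensembleineq}) in the paper), is left as an ``expected'' entropic-uncertainty-type relation, and your fallback route ($E_a+E_f\ge I$) is likewise only conjectured and admitted to be no easier. That inequality is not a routine verification: the constant $\tfrac12$ lives entirely inside it, data processing alone is (as you note) a factor of two too weak, and concavity arguments do not produce it. So the proposal, as it stands, has a genuine gap at its mathematical core.

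For comparison, here is how the paper closes that gap, following Christandl and Winter~\cite{christandl}. One embeds the two complementary measurements into the four-partite state $\Omega_{XYAB}$ of Eq.~(\ref{XYAB}), where independent, uniformly distributed classical registers $X$ and $Y$ record which generalized Pauli operators $X_B^x Z_B^y$ were applied to $B$. The joint entropy theorem gives $I(\Omega_{X(AB)})=\log d_B-S(\rho_B)+\chi(\mathcal E_0)$, $I(\Omega_{Y(AB)})=\chi(\mathcal E_1)$, and $I(\Omega_{(XY)(AB)})=\log d_B+S(\rho_A)-S(\rho_{AB})$; superadditivity of mutual information over the independent registers $X$ and $Y$ (equivalently, strong subadditivity) then yields $\chi(\mathcal E_0)+\chi(\mathcal E_1)\le I(\rho_{AB})$, and the theorem follows. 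Two further cautions about your plan: only \emph{two} bases are needed, whereas your proposal to average over a complete set of mutually unbiased bases (or a unitary two-design) is both stronger than necessary and problematic --- complete MUB sets are not known to exist in arbitrary dimension, while the theorem is claimed for all dimensions, and there is no evident reason the many-basis average respects the constant $\tfrac12$. If you want to complete your write-up, the shortest path is to prove Eq.~(\ref{ensembleineq}) for your two conjugate bases via the register construction above, then combine the two measurements into one as in Eq.~(\ref{povm}).
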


\begin{proof}
  Let $\rho_B=\sum_{i=0}^{d_B-1}\lambda_{i}\ket{e_i}_B\bra{e_i}$ be a
  spectral decomposition of $\rho_B=\T_{A}(\rho_{AB})$ where $d_B$ is
  the dimension of the subsystem $\mathcal{H}_B$. The proof method
  follows the construction used in~\cite{christandl}.

For any state $\sigma \in \mathcal{H}_B$, define the channels
\begin{align}
M_{0}(\sigma):&=\sum_{i=0}^{d_B-1}\ket{e_i}\bra{e_i}\sigma\ket{e_i}\bra{e_i}\nonumber\\
M_{1}(\sigma):&=\sum_{i=0}^{d_B-1} |\tilde e_j\rangle\langle\tilde
e_j|\sigma|\tilde e_j\rangle\langle\tilde e_j|, \label{channels1}
\end{align}
where $\{ |\tilde e_j \rangle \}_{j} $ is the {\em Fourier basis}
such that,
\begin{equation}
|\tilde e_j \rangle = \frac{1}{\sqrt{d}}\sum_{k=0}^{d_B-1}
\omega_d^{jk}\ket{e_k},~j=0,\ldots ,d_B-1, \label{fourier}
\end{equation}
and $\omega_d = e^{\frac{2\pi i}{d}}$ is the $d$-th root of unity.

Notice that $M_0(\rho_B)=\rho_B$, and
$M_1(\rho_B)=\frac{1}{d_B}I_B$, so that
$M_1(M_0(\rho_B))=M_0(M_1(\rho_B))=\frac{1}{d_B}I_B$. We can also
write
\begin{equation}
M_0(\sigma)=\frac{1}{d_B}\sum_{b=0}^{d_B-1}Z^b\sigma Z^{-b},~
M_{1}(\sigma)=\frac{1}{d_B}\sum_{a=0}^{d_B-1}X^a\sigma X^{-a},
\label{channels2}
\end{equation}
where $Z$ and $X$ are generalized $d_B$-dimensional Pauli operators,
\begin{align}
Z=&\sum_{j=0}^{d_B-1}\omega_d^j\ket{e_j}\bra{e_j},\nonumber\\
X=&\sum_{j=0}^{d_B-1}\ket{e_{j+1}}\bra{e_j}=\sum_{j=0}^{d_B-1} \omega_d^{-j}|\tilde
e_j \rangle \langle \tilde e_j |.\label{paulis}
\end{align}

In the following, we will write
\begin{align}
(I_A\otimes M_0)(\rho_{AB})&=\sum_{i=0}^{d_B-1} \sigma_A^i \otimes \lambda_i\ket{e_i}_B\bra{e_i},\nonumber\\
(I_A\otimes M_1)(\rho_{AB})&=\sum_{j=0}^{d_B-1} \tau_A^j \otimes
\frac{1}{d_B}|\tilde e_j \rangle_B \langle \tilde e_j|,
\end{align}
where $\lambda_i\sigma_A^i=\T_B [(I_A \otimes
\ket{e_i}_B\bra{e_i})\rho_{AB}]$, and $\tau_A^j/d_B=\T_B[(I_A
\otimes |\tilde e_j\rangle_B \langle\tilde e_j|)\rho_{AB}]$ for $i,~j \in \{0,\cdots, d_B-1\}$.

The induced ensembles on $A$ by the channels $M_0$ and $M_1$ will be denoted by $\mathcal
E_0:=\{\lambda_i,\sigma_A^i\}_i$ and $\mathcal E_1:=\{\frac
{1}{d_B},\tau_A^j\}_j$, and the entropy defects of the induced ensembles
on $A$ will be denoted as
\begin{align}
\chi(\mathcal E_0)=&S(\rho_A)-\sum_{i=0}^{d_B-1}\lambda_i S(\sigma_A^i),\nonumber\\
\chi(\mathcal E_1)=&S(\rho_A)-\frac{1}{d_B}\sum_{i=0}^{d_B-1}S(\tau_A^j).
\label{chi}
\end{align}

By defining a four-partite quantum state $\Omega_{XYAB}$ in
$\mathcal B \left( \C^{d_B} \otimes \C^{d_B} \otimes \C^{d_A} \otimes \C^{d_B}\right)$
such that
\begin{widetext}
\begin{equation}
  \Omega_{XYAB}:=\frac 1{d^2_B}\sum_{x,y=0}^{d_B-1}\ket{x}_X
  \bra{x}\otimes\ket{y}_Y\bra{y}\otimes(I_A\otimes X^x_BZ^y_B)\rho_{AB}(I_A\otimes
  Z^{-y}_BX^{-x}_B),
\label{XYAB}
\end{equation}
we have
\begin{align}
\Omega_{XAB}=&\frac
1{d_B}\sum_{x=0}^{d_B-1}\ket{x}_X\bra{x}\otimes X^x_B
\left(\sum_{i=0}^{d_B-1} \sigma_A^i \otimes \lambda_i\ket{e_i}_B\bra{e_i}\right)X_B^{-x},\nonumber\\
\Omega_{YAB}=&\frac
1{d_B}\sum_{y=0}^{d_B-1}\ket{y}_Y\bra{y}\otimes
Z_B^y\left(\sum_{j=0}^{d_B-1} \tau_A^j \otimes
\frac{1}{d_B}|\tilde e_j \rangle_B \langle \tilde
e_j|\right)Z_B^{-y}, \label{XAB, YAB}
\end{align}
\end{widetext}
and
\begin{equation}
  \Omega_{AB}=\rho_A\otimes\frac{I_B}{d_B}.
\end{equation}
By straightforward calculation, we can obtain
\begin{align}
I(\Omega_{X(AB)})=&S(\Omega_X)+S(\Omega_{AB})-S(\Omega_{XAB})\nonumber\\
=&\log d_B+\log d_B+S(\rho_A)-\log d_B\nonumber\\
    &~~~~~~~~~~~~- S\left(\sum_i \sigma_A^i \otimes \lambda_i\ket{e_i}_B\bra{e_i}\right)\nonumber\\
=&\log d_B+S(\rho_A)-H(\vec\lambda)-\sum_i\lambda_iS(\sigma_A^i)\nonumber\\
=&\log d_B-S(\rho_B)+\chi(\mathcal E_0), \label{Ixab}
\end{align}
where $I(\Omega_{X(AB)})$ is the mutual information of $\Omega_{X(AB)}$ with respect to the bipartite cut
$X-AB$, and the second, third equalities are due to the {\em joint
entropy theorem}~\cite{nc}. Analogously, we have
\begin{align}
&I(\Omega_{Y(AB)})=\chi(\mathcal E_1),\nonumber\\
&I(\Omega_{(XY)(AB)})=\log d_B+S(\rho_A)-S(\rho_{AB})\nonumber\\
&~~~~~~~~~~~~~~~~~~=\log d_B+I_c^{\leftarrow}(\rho_{AB}). \label{Iyab Ixyab}
\end{align}

Due to the independence of subsystems $X$ and $Y$, we have
$I(\Omega_{(XY)(AB)})\geq I(\Omega_{X(AB)})+I(\Omega_{Y(AB)})$,
which implies
\begin{equation}
\chi(\mathcal E_0)+\chi(\mathcal E_1)\leq I(\rho_{AB}).
\label{ensembleineq}
\end{equation}

Since $\chi(\mathcal E_0)$ and $\chi(\mathcal E_1)$ of
Eq.~(\ref{ensembleineq}) can be obtained, respectively, from
$\rho_{AB}$ by rank-1 measurements $\{\ket{e_i}_B\bra{e_i} \}_i$
and $\{ |\tilde e_j \rangle_B \langle \tilde e_j| \}_j$ of
subsystem $B$, by defining a rank-1 measurement
\begin{equation}
  \left\{ \frac{\ket{e_i}_B\bra{e_i}}{2}, \frac{|\tilde e_j
    \rangle_B \langle \tilde e_j|}{2}\right\}_{i,j}, \label{povm}
\end{equation}
we have
\begin{equation}
E_u^{\leftarrow}(\rho_{AB})\leq  \frac{\chi(\mathcal
E_0)}{2}+\frac{\chi(\mathcal E_1)}{2}\leq \frac{I(\rho_{AB})}{2},
\label{uppermain}
\end{equation}
which completes the proof.
\end{proof}

\begin{Cor}
For any tripartite pure state $\ket{\psi}_{ABC}$, we have
\begin{equation}
S(\rho_A)\leq E_a(\rho_{AB})+E_a(\rho_{AC}). \label{poly}
\end{equation}
\label{Cor: dual}
\end{Cor}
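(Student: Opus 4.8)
The plan is to derive this as a symmetric corollary of Theorem~\ref{Thm: upper}, applied to the two bipartite reductions $\rho_{AB}$ and $\rho_{AC}$ of the same pure state $\ket{\psi}_{ABC}$. The starting observation is that $\ket{\psi}_{ABC}$ simultaneously serves as a purification of $\rho_{AB}$ (with $C$ the ancilla) and of $\rho_{AC}$ (with $B$ the ancilla). Hence the Definition of UE gives me the two identities
\begin{align}
E_u^{\leftarrow}(\rho_{AB}) &= S(\rho_A) - E_a(\rho_{AC}), \nonumber\\
E_u^{\leftarrow}(\rho_{AC}) &= S(\rho_A) - E_a(\rho_{AB}).
\end{align}
The key structural point, and the only thing one must be careful about, is that the assisting systems swap roles: the UE of $\rho_{AB}$ is governed by the \emph{assisted} entanglement across $AC$, and vice versa.

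Next I would invoke Theorem~\ref{Thm: upper} on each reduction, obtaining $E_u^{\leftarrow}(\rho_{AB}) \leq I(\rho_{AB})/2$ and $E_u^{\leftarrow}(\rho_{AC}) \leq I(\rho_{AC})/2$. To make the two bounds combine cleanly I would rewrite the mutual informations using the purity of $\ket{\psi}_{ABC}$, which forces $S(\rho_{AB}) = S(\rho_C)$ and $S(\rho_{AC}) = S(\rho_B)$, so that
\begin{align}
I(\rho_{AB}) &= S(\rho_A) + S(\rho_B) - S(\rho_C), \nonumber\\
I(\rho_{AC}) &= S(\rho_A) + S(\rho_C) - S(\rho_B).
\end{align}
These are the only facts about $\ket{\psi}_{ABC}$ I need beyond Theorem~\ref{Thm: upper}.

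Finally, I would substitute the identities for $E_u^{\leftarrow}$ into the two bounds and add them. The crucial feature is that the asymmetric contributions $\pm\bigl(S(\rho_B) - S(\rho_C)\bigr)$ cancel upon summation, leaving
\begin{equation}
2S(\rho_A) - E_a(\rho_{AB}) - E_a(\rho_{AC}) \leq S(\rho_A),
\end{equation}
which rearranges immediately to $S(\rho_A) \leq E_a(\rho_{AB}) + E_a(\rho_{AC})$. I do not anticipate any genuine obstacle here, since all the analytic work is already carried by Theorem~\ref{Thm: upper}; the proof is essentially a symmetrization argument. The one place demanding attention is the bookkeeping in the first paragraph, namely correctly pairing each purification with the assistance term it produces, so that the mutual-information terms telescope rather than reinforce.
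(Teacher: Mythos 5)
Your proposal is correct and follows essentially the same route as the paper's own proof: both combine the purification identities $E_u^{\leftarrow}(\rho_{AB}) = S(\rho_A) - E_a(\rho_{AC})$ and $E_u^{\leftarrow}(\rho_{AC}) = S(\rho_A) - E_a(\rho_{AB})$ with the bound $E_u^{\leftarrow} \leq I/2$ from Theorem~\ref{Thm: upper}, then use $S(\rho_{AB}) = S(\rho_C)$ and $S(\rho_{AC}) = S(\rho_B)$ to make the entropy terms cancel. The only difference is cosmetic: you simplify the mutual informations via purity before summing, while the paper substitutes them directly and cancels at the end.
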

\begin{proof}
By Lemma~\ref{puremain1}, we have
\begin{align}
E_a(\rho_{AC})&=S(\rho_A)- E_u^{\leftarrow}(\rho_{AB}) ,\nonumber\\
E_a(\rho_{AB})&=S(\rho_A)- E_u^{\leftarrow}(\rho_{AC}),
\end{align}
and thus,
\begin{equation}
E_a(\rho_{AC})+E_a(\rho_{AB})=2S(\rho_A)-
E_u^{\leftarrow}(\rho_{AB}) - E_u^{\leftarrow}(\rho_{AC}).
\end{equation}
Now, by Theorem~\ref{Thm: upper}, we have
\begin{align}
E_a(\rho_{AC})+& E_a(\rho_{AB})\nonumber\\
 \geq & 2S(\rho_A)-\frac{I(\rho_{AB})}{2}-\frac{I(\rho_{AC})}{2}\nonumber\\
=&2S(\rho_A)-S(\rho_A)/2-S(\rho_B)/2+S(\rho_{AB})/2 \nonumber\\
 & -S(\rho_A)/2-S(\rho_C)/2+S(\rho_{AC})/2\nonumber\\
=&S(\rho_A).
\end{align}
\end{proof}

Corollary~\ref{Cor: dual} tells us that for a tripartite pure
state of arbitrary dimension, there exists a polygamy relation of
entanglement in terms of entropy of entanglement and EoA.
Furthermore, this is, we believe, the first result of the
polygamous (or dually monogamous) property of distribution of entanglement in
multipartite higher-dimensional quantum systems rather than qubits.

\section{Polygamy relation of entanglement in multi-qubit quantum systems}
\label{sec: ndual}

In this section, we show that the polygamy inequality of
entanglement in Corollary~\ref{Cor: dual} can be generalized into
multipartite quantum systems for the case when each subsystem is a
two-level quantum system. By investigating the functional relation
between concurrence and EoF in two-qubit systems~\cite{ww}, we
show that there exists a polygamy inequality of entanglement in
terms of entropy and EoA in $n$-qubit systems. We also show that,
in three-qubit systems, we have a more tight polygamy inequality
than Eq.~(\ref{poly}) in Corollary~\ref{Cor: dual}.

First, let us consider the functional relation of concurrence with
EoF in two-qubit systems.  For a 2-qubit mixed state $\rho_{AB}$
(or a pure state $\ket{\psi}_{AB} \in \mathbb{C}^2 \otimes
\mathbb{C}^{d}$), the relation between its concurrence,
$\mathcal{C}_{AB}$ and $E_{f}(\rho_{AB})$ can be given as a
monotone increasing, convex function $\mathcal{E}$~\cite{ww}, such
that
\begin{equation}
 E_f (\rho_{AB}) = {\mathcal E}(\mathcal{C}_{AB}),
\end{equation}
where
\begin{equation} {\mathcal E}(x) = H\Bigl({1\over 2} + {1\over
2}\sqrt{1-x^2}\Bigr), \hspace{0.5cm}\mbox{for } 0 \le x \le 1,
\label{eps}
\end{equation}
and $H(\cdot)$ is the binary entropy function $H(x) = -[x\log_2 x +
(1-x)\log_2 (1-x)]$. The same function $\mathcal{E}(x)$ relates also
the EoA of a bipartite state $\rho_{AB}$ with its CoA
via the equation
\begin{equation}
  E_a(\rho_{AB})\ge\mathcal{E}(\mathcal{C}_{AB}^a),
\label{eoacoa}
\end{equation}
which is due to the convexity of ${\mathcal E}$
and the definition of EoA.
The following lemma shows an important property of the function
$\mathcal{E}(x)$.
\begin{Lem}
\begin{equation}
{\mathcal E}(\sqrt{x^2 + y^2}) \leq {\mathcal E}(x) + {\mathcal
E}(y), \label{Epro}
\end{equation}
for $0 \leq x,~y\leq 1$ such that $0\leq x^2+y^2\leq 1$.
\label{Lem: E}
\end{Lem}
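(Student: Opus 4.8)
The plan is to reduce this $\ell^2$-type subadditivity to ordinary subadditivity of a one–variable function, and then to obtain that subadditivity from concavity. First I would substitute $u=x^2$ and $v=y^2$, so that the claimed inequality becomes $g(u+v)\le g(u)+g(v)$ on the domain $u,v\ge 0$, $u+v\le 1$, where $g(u):=\mathcal E(\sqrt{u})=H\bigl(\tfrac{1}{2}+\tfrac{1}{2}\sqrt{1-u}\bigr)$. Since $\mathcal E(0)=H(1)=0$ we have $g(0)=0$, and a concave function on $[0,1]$ that is nonnegative at $0$ is automatically subadditive: writing $u=\tfrac{u}{u+v}(u+v)+\tfrac{v}{u+v}\cdot 0$ and applying concavity gives $g(u)\ge \tfrac{u}{u+v}g(u+v)$, and symmetrically $g(v)\ge \tfrac{v}{u+v}g(u+v)$; adding the two yields $g(u+v)\le g(u)+g(v)$. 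Thus the whole lemma follows once I establish that $g$ is concave on $[0,1]$.

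The core of the argument is therefore to verify $g''(u)\le 0$. I would set $s=\sqrt{1-u}$, so that $p:=\tfrac{1}{2}+\tfrac{1}{2}s$ and $g(u)=H(p)$. Using $\tfrac{dH}{dp}=\log_2\tfrac{1-p}{p}=\log_2\tfrac{1-s}{1+s}$ together with $\tfrac{dp}{du}=-\tfrac{1}{4s}$, one finds $g'(u)=\tfrac{1}{4s\ln 2}\ln\tfrac{1+s}{1-s}$. Differentiating once more and using $\tfrac{ds}{du}=-\tfrac{1}{2s}$ collapses the sign of $g''$ onto the sign of a single bracket, and $g''(u)\le 0$ becomes equivalent to the elementary inequality $\tfrac{2s}{1-s^2}\ge \ln\tfrac{1+s}{1-s}$ for $s\in[0,1)$.

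Finally I would dispose of this auxiliary inequality by comparing both sides at $s=0$, where they agree (both equal $0$), and comparing derivatives: a direct computation gives $\tfrac{d}{ds}\bigl(\tfrac{2s}{1-s^2}-\ln\tfrac{1+s}{1-s}\bigr)=\tfrac{4s^2}{(1-s^2)^2}\ge 0$, so the difference is nondecreasing and hence nonnegative on $[0,1)$. This closes the chain: the auxiliary inequality gives concavity of $g$, concavity together with $g(0)=0$ gives subadditivity of $g$, and subadditivity of $g$ gives the stated inequality for $\mathcal E$.

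I expect the only real obstacle to be the concavity computation itself, namely carrying the chain rule cleanly through the substitution $u\mapsto s=\sqrt{1-u}$ and simplifying $g''$ down to the tidy comparison of $\tfrac{2s}{1-s^2}$ with $\ln\tfrac{1+s}{1-s}$. The reduction to one variable and the standard concave-implies-subadditive step are routine, and the auxiliary inequality is settled by the single derivative check above.
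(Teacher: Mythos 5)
Your proof is correct, but it follows a genuinely different route from the paper's. The paper works directly with the two-variable function $f(x,y)=\mathcal{E}(x)+\mathcal{E}(y)-\mathcal{E}(\sqrt{x^2+y^2})$ on the compact domain $D$ and argues by a critical-point analysis: it asserts (with the details left as ``directly checked'') that $f$ has no vanishing gradient in the interior of $D$ and is nonnegative on the boundary, so its minimum over $D$ is nonnegative. You instead substitute $u=x^2$, $v=y^2$ and reduce the claim to subadditivity of the single-variable function $g(u)=\mathcal{E}(\sqrt{u})=H\bigl(\tfrac12+\tfrac12\sqrt{1-u}\bigr)$, which you obtain from concavity of $g$ together with $g(0)=0$ via the standard splitting $g(u)\ge\tfrac{u}{u+v}g(u+v)$, $g(v)\ge\tfrac{v}{u+v}g(u+v)$. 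Your derivative computations check out: $g'(u)=\tfrac{1}{4s\ln 2}\ln\tfrac{1+s}{1-s}$ with $s=\sqrt{1-u}$, the sign of $g''$ reduces to the comparison $\tfrac{2s}{1-s^2}\ge\ln\tfrac{1+s}{1-s}$, and the difference of the two sides has derivative $\tfrac{4s^2}{(1-s^2)^2}\ge 0$ and vanishes at $s=0$, so $g''\le 0$ on $(0,1)$ (and concavity extends to the closed interval $[0,1]$ by continuity, which covers the endpoint cases where the derivative formulas degenerate). Your approach buys two things the paper's does not: every step is explicitly verified rather than deferred to an unstated calculation, and the intermediate fact you establish---that entanglement of formation is a concave function of the squared concurrence (the tangle)---is of independent interest and reusable elsewhere. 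What the paper's approach buys is brevity and the avoidance of any differentiation of $\mathcal{E}$ in the write-up, at the cost of hiding the real work in the phrase ``it can be directly checked.''
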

\begin{proof}
By considering \begin{equation} f(x,y)= {\mathcal E}(x) +
{\mathcal E}(y)-{\mathcal E}(\sqrt{x^2 + y^2}), \label{f}
\end{equation} as a two-vairable real-valued function on the
domain $D=\{(x,y)| 0 \leq x,~y \leq 1, 0 \leq \ x^2 +  y^2 \leq 1
\}$, it is enough to show that $f(x,y)\geq 0$ in $D$.

Since $D$ is a compact subset in $\R^2$, whereas $f$ is analytic
on the interior of $D$, and continuous on $D$, the minimum value
of $f$ arises only on the critical points or on the boundary of
$D$. It can be directly checked that $f$ does not have any
vanishing gradient on the interior of $D$, and has non-negative
function values on the boundary of $D$. Thus, $f$ is non-negative
on the domain $D$.
\end{proof}

\subsection{Three-qubit systems}
\label{subsec: 3-qubit}

A direct observation from~\cite{CKW} shows that, for a 3-qubit
pure state $\ket{\psi}_{ABC}$,
\begin{equation}
 \mathcal{C}_{A(BC)}^2 = \mathcal{C}_{AB}^2 + ({\mathcal{C}^{a}_{AC}})^2,
\label{3tangle}
\end{equation}
where $\mathcal{C}_{AB}$ and $\mathcal{C}^{a}_{AC}$ are the
concurrence and concurrence of assistance of $\rho_{AB}$ and
$\rho_{AC}$ respectively. (Later, Eq.~(\ref{3tangle}) was formally
shown in~\cite{ys}.) From Eq.~(\ref{3tangle}) together with
Lemma~\ref{Lem: E}, we have the following theorem.
\begin{Thm}
For a three-qubit pure state $\ket{\psi}_{ABC}$,
\begin{equation}
S(\rho_{A}) \leq  E_f(\rho_{AB}) + E_a(\rho_{AC}).
\label{3qubitineq}
\end{equation}
\label{thm2}
\end{Thm}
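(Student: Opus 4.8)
The plan is to pass from entropic quantities to concurrence quantities through the function $\mathcal E$ of Eq.~(\ref{eps}), apply the three-qubit concurrence identity Eq.~(\ref{3tangle}), and split the resulting square-root term with Lemma~\ref{Lem: E}. First I would observe that, since $A$ is a single qubit, the pure state $\ket\psi_{A(BC)}$ lives in $\C^2\otimes\C^4$, so the relation $E_f=\mathcal E(\mathcal C)$ valid for systems of the form $\C^2\otimes\C^d$ applies to the $A$--$BC$ cut. Because the entanglement of formation of a pure state equals its entropy of entanglement, this gives the key identity
\[
S(\rho_A)=\mathcal E(\mathcal C_{A(BC)}).
\]
One checks this directly: writing the eigenvalues of the qubit state $\rho_A$ as $\tfrac12\bigl(1\pm\sqrt{1-\mathcal C_{A(BC)}^2}\,\bigr)$ yields $S(\rho_A)=H\bigl(\tfrac12+\tfrac12\sqrt{1-\mathcal C_{A(BC)}^2}\,\bigr)$, which is precisely $\mathcal E(\mathcal C_{A(BC)})$.

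Next I would substitute the concurrence identity Eq.~(\ref{3tangle}), namely $\mathcal C_{A(BC)}=\sqrt{\mathcal C_{AB}^2+(\mathcal C^a_{AC})^2}$, to obtain
\[
S(\rho_A)=\mathcal E\!\left(\sqrt{\mathcal C_{AB}^2+(\mathcal C^a_{AC})^2}\,\right).
\]
Before invoking Lemma~\ref{Lem: E} I would verify its hypotheses with $x=\mathcal C_{AB}$ and $y=\mathcal C^a_{AC}$: both lie in $[0,1]$, and $x^2+y^2=\mathcal C_{A(BC)}^2\le1$ since the concurrence of a qubit across any bipartite cut is at most one. Lemma~\ref{Lem: E} then bounds the right-hand side above by $\mathcal E(\mathcal C_{AB})+\mathcal E(\mathcal C^a_{AC})$.

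Finally I would convert the two terms back to entropic quantities. The first is exactly $E_f(\rho_{AB})=\mathcal E(\mathcal C_{AB})$, since $\rho_{AB}$ is a two-qubit state, while the second obeys $\mathcal E(\mathcal C^a_{AC})\le E_a(\rho_{AC})$ by Eq.~(\ref{eoacoa}). Chaining the three relations yields $S(\rho_A)\le E_f(\rho_{AB})+E_a(\rho_{AC})$. The main subtlety I anticipate is keeping the two conversions pointed the right way: the formation side is an exact equality $E_f=\mathcal E(\mathcal C)$, whereas the assistance side supplies only the one-sided bound $E_a\ge\mathcal E(\mathcal C^a)$, and it is essential that this inequality happens to align with the desired direction of the final claim. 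The genuinely hard ingredients, the superadditivity-type estimate of Lemma~\ref{Lem: E} and the tangle-like identity Eq.~(\ref{3tangle}), are already established, so what remains is checking hypotheses and bookkeeping the inequality directions.
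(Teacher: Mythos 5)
Your proposal is correct and follows essentially the same route as the paper's own proof: express $S(\rho_A)=\mathcal E(\mathcal C_{A(BC)})$ via the $\C^2\otimes\C^4$ cut, substitute the identity $\mathcal C_{A(BC)}^2=\mathcal C_{AB}^2+(\mathcal C^a_{AC})^2$, split with Lemma~\ref{Lem: E}, and convert back using $E_f(\rho_{AB})=\mathcal E(\mathcal C_{AB})$ and Eq.~(\ref{eoacoa}). Your extra checks (the eigenvalue computation behind the key identity and the verification that $\mathcal C_{AB}^2+(\mathcal C^a_{AC})^2\le 1$ so Lemma~\ref{Lem: E} applies) are details the paper leaves implicit, and they are accurate.
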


\begin{proof}
Since $\ket{\psi}_{ABC}$ is a bipartite pure state in
$\mathbb{C}^2 \otimes \mathbb{C}^{4}$ with respect to the
bipartite cut $A$ and $BC$, we have,
\begin{equation}
S(\rho_{A})= E_f(\psi_{A(BC)})={\mathcal E}(\mathcal{C}_{A(BC)}).
\end{equation}
Thus,
\begin{eqnarray}
S(\rho_{A}) 
            &=& {\mathcal E}(\mathcal{C}_{A(BC)})\nonumber\\
            &=& {\mathcal E}( \sqrt{\mathcal{C}^{2}_{AB}+ {\mathcal{C}^{a}_{AC}}^2}~)\nonumber\\
            &\leq & {\mathcal E}(\mathcal{C}_{AB}) +  {\mathcal E}(\mathcal{C}^{a}_{AC})\nonumber\\
            &\leq & E_f(\rho_{AB}) + E_a(\rho_{AC}),
\end{eqnarray}
where the first inequality is by Lemma~\ref{Lem: E}, and the
second inequality is by Eq.~(\ref{eoacoa}).
\end{proof}
Thus, the polygamy relation of distributed entanglement in tripartite quantum
systems obtained in Corollary~\ref{Cor: dual} can have a more
tight form in three-qubit systems. Furthermore, the result of
Theorem~\ref{thm2} together with Eqs.~(\ref{KWmain1}) and
(\ref{puremain1}) give us the following corollary.
\begin{Cor}
For any two-qubit mixed state $\rho_{AB}$ with rank less than or equal to two,
\begin{equation}
 I^\leftarrow(\rho_{AB})\leq E_a(\rho_{AB}),
\label{ubound1}
\end{equation}
\begin{equation}
 E_u^{\leftarrow}(\rho_{AB})\leq E_f(\rho_{AB}).
\label{ubound2}
\end{equation}
\label{Cor: 1}
\end{Cor}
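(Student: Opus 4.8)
The plan is to deduce both inequalities from Theorem~\ref{thm2} by passing to a purification, and the whole argument hinges on a single observation about the rank hypothesis. Since $\rho_{AB}$ is a two-qubit state with $\mathrm{rank}(\rho_{AB})\leq 2$, it admits a purification $\ket{\psi}_{ABC}$ whose purifying system $C$ has dimension at most $2$; taking $C$ to be a qubit, the state $\ket{\psi}_{ABC}$ lies in $\mathbb{C}^2\otimes\mathbb{C}^2\otimes\mathbb{C}^2$ and is therefore a genuine three-qubit pure state to which Theorem~\ref{thm2} applies. I expect this to be the only subtle point: it is precisely the rank restriction that forces $C$ to be a qubit, so that both $\rho_{AB}$ and $\rho_{AC}=\T_B\ket{\psi}_{ABC}\bra{\psi}$ are two-qubit states and the concurrence--EoF machinery behind Theorem~\ref{thm2} is available. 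Without rank $\leq 2$ the environment need not be a qubit and the theorem would not apply.

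Granting this, Eq.~(\ref{ubound2}) is immediate. The definition of UE in Eq.~(\ref{puremain1}) reads
\begin{equation}
E_u^{\leftarrow}(\rho_{AB})=S(\rho_A)-E_a(\rho_{AC}),
\end{equation}
while Theorem~\ref{thm2} applied to $\ket{\psi}_{ABC}$ gives $S(\rho_A)\leq E_f(\rho_{AB})+E_a(\rho_{AC})$, i.e. $S(\rho_A)-E_a(\rho_{AC})\leq E_f(\rho_{AB})$. Combining the two yields $E_u^{\leftarrow}(\rho_{AB})\leq E_f(\rho_{AB})$.

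For Eq.~(\ref{ubound1}) I would apply Theorem~\ref{thm2} to the \emph{same} three-qubit pure state but with the roles of $B$ and $C$ interchanged, which is legitimate because the theorem holds for every three-qubit pure state and $A$ remains the distinguished subsystem. The relabelled bound is $S(\rho_A)\leq E_f(\rho_{AC})+E_a(\rho_{AB})$, hence $S(\rho_A)-E_f(\rho_{AC})\leq E_a(\rho_{AB})$. On the other hand, Eq.~(\ref{KWmain1}) gives $S(\rho_A)=I^{\leftarrow}(\rho_{AB})+E_f(\rho_{AC})$, so $S(\rho_A)-E_f(\rho_{AC})=I^{\leftarrow}(\rho_{AB})$. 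Substituting this identity produces $I^{\leftarrow}(\rho_{AB})\leq E_a(\rho_{AB})$. Both inequalities are thus purely algebraic consequences of Theorem~\ref{thm2} and the tripartite identities~(\ref{KWmain1}) and~(\ref{puremain1}), with all the real work done by the purification step above.
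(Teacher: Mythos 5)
Your proof is correct and follows exactly the route the paper intends: the paper states Corollary~\ref{Cor: 1} as a direct consequence of Theorem~\ref{thm2} together with Eqs.~(\ref{KWmain1}) and~(\ref{puremain1}), which is precisely your argument, with the rank-$\leq 2$ hypothesis used (as you say) to make the purifying system a qubit so that Theorem~\ref{thm2} applies. Your write-up simply makes explicit the purification step and the $B\leftrightarrow C$ relabelling that the paper leaves implicit.
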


\begin{Rmk}\label{Rmk: 1}
Eq.~(\ref{ubound2}) of Corollary~\ref{Cor: 1} implies that any
two-qubit separable state $\rho_{AB}$ of rank less than or equal
to two has zero UE, $E_u^{\leftarrow}(\rho_{AB})=0$. However,
this is not generally true for two-qubit separable states of rank
larger than two. Here, we provide an example of two-qubit
rank-three separable state with non-zero UE.

{\it Example:} Let us consider the following state in $\C^2\otimes
\C^2\otimes \C^3$ quantum system~\cite{CCJKKL},
\begin{eqnarray}
\ket{\Psi}_{ABC}= \frac{1}{\sqrt{2}}\ket{x}_{AC}\ket{0}_B
+\frac{1}{\sqrt{2}}\ket{y}_{AC}\ket{1}_B, \label{eq:ex_Psi}
\end{eqnarray}
where $\ket{x}$ and $\ket{y}$ are two orthogonal states in the
$\C^2\otimes \C^3$ such that
\begin{align}
\ket{x}=&(\ket{02}+\sqrt{2}\ket{10})/{\sqrt{3}},\nonumber\\
\ket{y}=&(\ket{12}+\sqrt{2}\ket{01})/{\sqrt{3}}.
\end{align}

First, since $\rho_{A}=(\ket{0}_{A}\bra{0}+\ket{1}_{A}\bra{1})/2$,
it is clear that $\mathcal{C}_{A(BC)}=\sqrt{2(1-\T\rho^2_A)}=1$,
therefore we have $S(\rho_A)=1$.

Since $\rho_{AC}=(\ket{x}_{AC}\bra{x}+\ket{y}_{AC}\bra{y})/2$,
Hughston-Jozsa-Wootters (HJW) theorem~\cite{HJW} says that for any
decompositions of
$\rho_{AC}=\sum_{i}p_{i}\ket{\phi_{i}}_{AC}\bra{\phi_{i}}$, there
exists an unitary operator $(u_{ij})$ such that
$\sqrt{p_{i}}\ket{\phi_{i}}_{AC}=(u_{i1}\ket{x}_{AC}+u_{i2}\ket{y}_{AC})/\sqrt{2}$
with $2p_{i}=|u_{i1}|^{2}+|u_{i2}|^{2}$. Thus,
\begin{align}
\T_{C}(\ket{\phi_{i}}_{AC}\bra{\phi_{i}})=&\frac{1}{6p_i}
\begin{pmatrix}
  |u_{i1}|^{2}+2|u_{i2}|^{2} & u_{i1}u_{i2}^{*} \\
  u_{i2}u_{i1}^{*} & |u_{i2}|^{2}+2|u_{i1}|^{2}
\end{pmatrix}
\nonumber\\
=&\frac{1}{3}I_{A}+\frac{1}{3}\ket{\psi_{i}}_{A}\bra{\psi_{i}},
\label{eq:decomp_matrix}
\end{align}
with
$\ket{\psi_{i}}=(u_{i2}^*\ket{0}+u_{i1}^*\ket{1})/{\sqrt{2p_{i}}}$,
and we obtain that
$\mathcal{C}(\ket{\phi_{i}}_{AC})=\frac{2\sqrt{2}}{3}$ for any
pure state $\ket{\phi_{i}}_{AC}$ in any pure state decomposition
of $\rho_{AC}$.

Since $\ket{\phi_{i}}_{AC}$ is a $2\otimes 3$ pure state, we have
\begin{eqnarray}
 E_f(\ket{\phi_{i}}_{AC})&=&{\mathcal E}(\mathcal{C}(\ket{\phi_{i}}_{AC}))\nonumber\\
                         &=& H\Bigl( \frac{2}{3} \Bigr)\nonumber\\
                         &=& \log_2 3 -\frac{2}{3},
\end{eqnarray}
and thus $E_f(\rho_{AC})=E_a(\rho_{AC})=\log_2 3 -\frac{2}{3}$.

Now, we have
$E_u^{\leftarrow}(\rho_{AB})=S(\rho_A)-E_a(\rho_{AC})=\frac{5}{3}-\log_2
3 > 0$, whereas, it can be easily seen that $\rho_{AB}$ has a {\em
Positive Partial Transposition}~(PPT) which is equivalent to
separability for two-qubit states~\cite{horo1}. Thus, $\rho_{AB}$
is a two-qubit, rank-three separable state with non-zero
UE.
\end{Rmk}

\subsection{$n$-qubit systems}
\label{subsec: n-qubit} The polygamy inequality of
entanglement in $n$-qubit systems in Eq.~(\ref{ndual}) gives us an
inequality
\begin{equation}
\mathcal{C}_{A_1 (A_2 \cdots A_n)}  \leq \sqrt{
(\mathcal{C}^a_{A_1 A_2})^2 +\cdots+(\mathcal{C}^a_{A_1 A_n})^2 }.
\label{ndual2}
\end{equation}
Thus, together with Lemma~\ref{Lem: E}, we have the following
theorem.
\begin{Thm}
For any $n$-qubit pure state $\ket{\psi}_{A_1 \cdots A_n}$,
\begin{equation}
S(\rho_{A_1})  \leq  E_a(\rho_{A_1 A_2}) +\cdots+E_a(\rho_{A_1
A_n}). \label{nEdua}
\end{equation}
\label{thm: dual}
\end{Thm}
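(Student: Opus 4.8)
The plan is to mimic the three-qubit argument of Theorem~\ref{thm2}, replacing the exact three-tangle identity by the $n$-qubit inequality~(\ref{ndual2}) and then controlling the resulting domain enlargement of $\mathcal{E}$ by induction. First I would record that, since $\mathcal{H}_{A_1}\cong\mathbb{C}^2$ and $\ket{\psi}_{A_1\cdots A_n}$ is pure, the cut $A_1|A_2\cdots A_n$ presents a pure state in $\mathbb{C}^2\otimes\mathbb{C}^{2^{n-1}}$, so the functional relation $E_f=\mathcal{E}(\mathcal{C})$ recalled at the start of this section gives $S(\rho_{A_1})=\mathcal{E}(\mathcal{C}_{A_1(A_2\cdots A_n)})$. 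Writing $a_i:=\mathcal{C}^a_{A_1A_i}\in[0,1]$ for $i=2,\dots,n$, inequality~(\ref{ndual2}) together with the monotonicity of $\mathcal{E}$ yields $S(\rho_{A_1})\le\mathcal{E}\bigl(\min\{1,\sqrt{\sum_{i=2}^n a_i^2}\}\bigr)$, where truncating the argument at $1$ is legitimate because $\mathcal{C}_{A_1(A_2\cdots A_n)}\le 1$.

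The core of the argument is then the purely analytic claim that $\mathcal{E}\bigl(\min\{1,\sqrt{\sum_{i=2}^n a_i^2}\}\bigr)\le\sum_{i=2}^n\mathcal{E}(a_i)$ for all $a_i\in[0,1]$. I would prove this by induction on the number of terms, using Lemma~\ref{Lem: E} as the induction step; the base case of a single term is the trivial equality $\mathcal{E}(\min\{1,a_2\})=\mathcal{E}(a_2)$. Set $T_k=\sqrt{\sum_{i=2}^{k}a_i^2}$ and $R_k=\min\{1,T_k\}$; the inductive hypothesis is $\mathcal{E}(R_{n-1})\le\sum_{i=2}^{n-1}\mathcal{E}(a_i)$, and it suffices to show $\mathcal{E}(R_n)\le\mathcal{E}(R_{n-1})+\mathcal{E}(a_n)$. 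If $T_{n-1}\ge 1$ then $\mathcal{E}(R_{n-1})=\mathcal{E}(1)=1\ge\mathcal{E}(R_n)$ and we are done; if $T_{n-1}<1$ and $T_n\le 1$ then $R_n=\sqrt{R_{n-1}^2+a_n^2}$ with $R_{n-1}^2+a_n^2\le 1$, so Lemma~\ref{Lem: E} applies verbatim.

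The only delicate case, which I expect to be the main obstacle, is $T_{n-1}<1<T_n$: here $R_n=1$ but $\sqrt{R_{n-1}^2+a_n^2}>1$ falls outside the domain $[0,1]$ on which $\mathcal{E}$ (and hence Lemma~\ref{Lem: E}) is defined, so the inequality cannot be invoked directly. I would neutralize this by truncating the last coordinate: put $a_n':=\sqrt{1-R_{n-1}^2}\le a_n$, so that $R_{n-1}^2+(a_n')^2=1$ and Lemma~\ref{Lem: E} gives $1=\mathcal{E}(1)\le\mathcal{E}(R_{n-1})+\mathcal{E}(a_n')$; monotonicity of $\mathcal{E}$ then gives $\mathcal{E}(a_n')\le\mathcal{E}(a_n)$, whence $\mathcal{E}(R_n)=1\le\mathcal{E}(R_{n-1})+\mathcal{E}(a_n)$. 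This closes the induction. The reason the enlargement is harmless, worth isolating, is precisely that the individual $a_i$ always remain in $[0,1]$ (each is a genuine two-qubit concurrence of assistance), so the over-unity behaviour only ever appears in the aggregated square root.

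Finally I would assemble the pieces. Combining the analytic claim with the opening estimate gives $S(\rho_{A_1})\le\sum_{i=2}^n\mathcal{E}(\mathcal{C}^a_{A_1A_i})$, and applying~(\ref{eoacoa}) termwise, $\mathcal{E}(\mathcal{C}^a_{A_1A_i})\le E_a(\rho_{A_1A_i})$, produces the desired inequality~(\ref{nEdua}). Since each application of $\mathcal{E}$ and of~(\ref{eoacoa}) is to a single two-qubit quantity and thus stays within range, the whole chain is valid, and the truncation step above is the one point that requires genuine care.
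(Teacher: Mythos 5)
Your proposal is correct and follows essentially the same route as the paper's proof: the identity $S(\rho_{A_1})=\mathcal{E}(\mathcal{C}_{A_1(A_2\cdots A_n)})$, inequality~(\ref{ndual2}) with monotonicity of $\mathcal{E}$, iterated applications of Lemma~\ref{Lem: E}, a truncation of the offending term when the squared CoAs sum past $1$ (the paper subtracts the excess $T$ at the crossing index $k$, which is the same device as your $a_n'=\sqrt{1-R_{n-1}^2}$), and finally Eq.~(\ref{eoacoa}) termwise. The only difference is organizational: you package the two cases of the paper (sum of squares $\leq 1$ versus $>1$) into a single induction on the number of terms via $R_k=\min\{1,T_k\}$, which is a cosmetic, not a mathematical, departure.
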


\begin{proof}
First, let us assume that $ (\mathcal{C}^a_{A_1 A_2})^2
+\cdots+(\mathcal{C}^a_{A_1 A_n})^2 \leq 1$, then we have
\begin{align}
S(\rho_{A_1})=&{\mathcal E}(\mathcal{C}_{A_1 (A_2 \cdots A_n)})\nonumber\\
\leq &{\mathcal E}\left(\sqrt{
(\mathcal{C}^a_{A_1 A_2})^2 +\cdots+(\mathcal{C}^a_{A_1 A_n})^2 }\right)\nonumber\\
\leq &{\mathcal E}\left( \mathcal{C}^a_{A_1 A_2}\right) +
{\mathcal E}\left(\sqrt{(\mathcal{C}^a_{A_1 A_3})^2
+\cdots+(\mathcal{C}^a_{A_1 A_n})^2}\right)\nonumber\\
\leq &{\mathcal E}\left( \mathcal{C}^a_{A_1 A_2}\right)+ {\mathcal
E}\left(\mathcal{C}^a_{A_1 A_3}\right) +\cdots + {\mathcal
E}\left(
\mathcal{C}^a_{A_1 A_n}\right)\nonumber\\
\leq& E_a\left(\rho_{A_1 A_2}\right)
+\cdots+E_a\left(\rho_{A_1 A_n}\right),
\end{align}
where the first inequality is due to the monotonicity of the function
$\mathcal E$, the second and third inequalities are obtained by
iterating Lemma~\ref{Lem: E}, and the last inequality is by Eq.~(\ref{eoacoa}).

Now, assume that $ (\mathcal{C}^a_{A_1 A_2})^2
+\cdots+(\mathcal{C}^a_{A_1 A_n})^2 > 1$. Since $S(\rho_{A_1})\leq
1$ for any $n$-qubit pure state $\ket{\psi}_{A_1 \cdots A_n}$, it
is enough to show that $E_a(\rho_{A_1 A_2}) +\cdots+E_a(\rho_{A_1
A_n}) \geq 1$.

Let us first note that there exist $k \in \{2,\ldots ,n-1 \}$ that
satisfies
\begin{align}
&(\mathcal{C}^a_{A_1 A_2})^2 +\cdots+(\mathcal{C}^a_{A_1 A_k})^2
\leq 1,\nonumber\\
&(\mathcal{C}^a_{A_1 A_2})^2 +\cdots+(\mathcal{C}^a_{A_1
A_{k+1}})^2 >1,
\end{align}
and let
\begin{equation}
T=(\mathcal{C}^a_{A_1 A_2})^2 +\cdots+(\mathcal{C}^a_{A_1
A_{k+1}})^2-1.
\end{equation}

Since, $(\mathcal{C}^a_{A_1 A_{k+1}})^2-T=1-(\mathcal{C}^a_{A_1
A_2})^2 -\cdots-(\mathcal{C}^a_{A_1 A_{k}})^2 $, we have
\begin{equation}
0\leq (\mathcal{C}^a_{A_1 A_{k+1}})^2-T \leq 1,
\end{equation}
and
\begin{align}
1 =& \mathcal E \left( \sqrt{(\mathcal{C}^a_{A_1 A_2})^2
+\cdots+(\mathcal{C}^a_{A_1
A_{k+1}})^2-T} \right)\nonumber\\
\leq& \mathcal E \left( \sqrt{(\mathcal{C}^a_{A_1 A_2})^2
+\cdots+(\mathcal{C}^a_{A_1 A_k})^2} \right)\nonumber\\
&~~~~~~~~~~~~~~~~+\mathcal E \left( \sqrt{\left(\mathcal{C}^a_{A_1 A_{k+1}}\right)^2-T} \right)\nonumber\\
\leq& \mathcal E \left( \mathcal{C}^a_{A_1 A_2}\right)+\cdots
+\mathcal E \left( \mathcal{C}^a_{A_1 A_k}\right)+
\mathcal E \left( \mathcal{C}^a_{A_1 A_{k+1}}\right)\nonumber\\
\leq& E_a(\rho_{A_1 A_2})+\cdots + E_a(\rho_{A_1 A_n}),
\end{align}
where the first and second inequalities are by Lemma~\ref{Lem: E} and
by monotonicity of $\mathcal E$, and the last inequality is by Eq.~(\ref{eoacoa}).
\end{proof}

\section{Unlocalizable Entanglement versus Other Measures of
  Correlation}
\label{sec: other} In this section, we provide some properties of
UE concerned with several other correlation
measures. By investigating the relation between UE
and EoF in $2 \otimes 2 \otimes d$ quantum system, we show that
any two-qubit state with zero UE is a separable state. We
also provide a quantitative relation among entropy, localizable
entanglement, and UE for tripartite mixed states.

\subsection{$2\otimes2\otimes d$ pure state}
\label{subsec: 22d}

Let $\ket{\psi}_{ABC}$ be a tripartite pure state in
$\mathbb{C}^2\otimes\mathbb{C}^2\otimes\mathbb{C}^d$.

\begin{Thm}
  For any 2-qubit state $\rho_{AB}$,
\begin{equation}
  E_u^{\leftarrow}(\rho_{AB})=0\Longrightarrow  \rho_{AB}\textrm{ separable},
\end{equation}
or, equivalently, if $S(\rho_A)= E_a(\rho_{AC})$, then
$E_f(\rho_{AB})=0$.
\label{thm: 22d}
\end{Thm}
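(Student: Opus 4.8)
The plan is to use the equivalent formulation $S(\rho_A)=E_a(\rho_{AC})\Rightarrow E_f(\rho_{AB})=0$ and to argue entirely with the concurrence $\mathcal{C}$ and its assisted version $\mathcal{C}^a$, since for the qubit-$A$ reductions every entropy in sight is a function of a single-qubit determinant. Throughout, fix the purification $\ket{\psi}_{ABC}$ in $\K_A\otimes\K_B\otimes\K_C$ with $\dim\K_A=\dim\K_B=2$; because $B$ is a qubit, $\rho_{AC}=\T_B\ket{\psi}_{ABC}\bra{\psi}$ has rank at most two, and this structural fact is what I will ultimately lean on.

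First I would unpack the hypothesis $E_u^{\leftarrow}(\rho_{AB})=0$. By the definition (\ref{puremain1}) this reads $E_a(\rho_{AC})=S(\rho_A)$, and since $E_a(\rho_{AC})=\max\sum_i p_i S(\rho_A^i)$ over pure-state decompositions $\rho_{AC}=\sum_i p_i\ket{\phi_i}_{AC}\bra{\phi_i}$ with $\rho_A^i=\T_C\ket{\phi_i}\bra{\phi_i}$, concavity of the von Neumann entropy gives $\sum_i p_i S(\rho_A^i)\le S\!\big(\sum_i p_i\rho_A^i\big)=S(\rho_A)$, with equality if and only if all the $\rho_A^i$ coincide. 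Hence the maximizing decomposition must satisfy $\rho_A^i=\rho_A$ for every $i$. This is the single structural consequence of the hypothesis that I will feed into the concurrence estimates.

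Next I would show that this forces $\mathcal{C}^a(\rho_{AC})=\mathcal{C}_{A(BC)}$. For a $2\otimes d$ pure state the concurrence $\mathcal{C}(\ket{\phi_i})=\sqrt{2(1-\T(\rho_A^i)^2)}$ depends only on the qubit marginal $\rho_A^i$; since $\rho_A^i=\rho_A$ for all $i$, each member of the optimal decomposition has $\mathcal{C}(\ket{\phi_i})=\sqrt{2(1-\T\rho_A^2)}=\mathcal{C}_{A(BC)}$, whence $\mathcal{C}^a(\rho_{AC})\ge\sum_i p_i\mathcal{C}(\ket{\phi_i})=\mathcal{C}_{A(BC)}$. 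The reverse inequality is general: using $\sqrt{2(1-\T\sigma^2)}=2\sqrt{\det\sigma}$ for a single-qubit $\sigma$, concavity of $\sigma\mapsto\sqrt{\det\sigma}$, and $\sum_j q_j\tilde\rho_A^j=\rho_A$ for an arbitrary decomposition $\{q_j,\chi_j\}$, I get $\sum_j q_j\mathcal{C}(\chi_j)=2\sum_j q_j\sqrt{\det\tilde\rho_A^j}\le 2\sqrt{\det\rho_A}=\mathcal{C}_{A(BC)}$, so $\mathcal{C}^a(\rho_{AC})\le\mathcal{C}_{A(BC)}$. Combining the two bounds yields the equality $\mathcal{C}^a(\rho_{AC})=\mathcal{C}_{A(BC)}$.

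Finally I would invoke the concurrence identity $\mathcal{C}_{A(BC)}^2=\mathcal{C}_{AB}^2+(\mathcal{C}^a_{AC})^2$, the $2\otimes2\otimes d$ counterpart of Eq.~(\ref{3tangle}): substituting $\mathcal{C}^a_{AC}=\mathcal{C}_{A(BC)}$ gives $\mathcal{C}_{AB}^2=0$, i.e.\ $\mathcal{C}_{AB}=0$, which for a two-qubit state is equivalent to $E_f(\rho_{AB})=0$ and to separability of $\rho_{AB}$. The main obstacle is precisely this last step: Eq.~(\ref{3tangle}) is stated only for three qubits, and one must justify that the identity survives when $\dim\K_C=d>2$. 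This is where the rank-two structure of $\rho_{AC}$ (forced by $B$ being a qubit) does the work, so I would either cite the general argument of~\cite{ys} or verify the identity directly from the observation that all three quantities depend only on $\rho_A$, on the two-qubit block $\rho_{AB}$, and on the two-dimensional support of $\rho_{AC}$. As a purely entropic fallback, one can apply the Koashi--Winter relation (\ref{KWmain1}) with the roles of $B$ and $C$ exchanged to reduce the goal $E_f(\rho_{AB})=0$ to $I^{\leftarrow}(\rho_{AC})=S(\rho_A)$; but that reroutes the problem into relating a measurement on $C$ to the steering of $A$ by $B$, reintroducing the same cross-system difficulty, so I expect the concurrence route to be the cleaner of the two.
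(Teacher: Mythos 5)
Your first two steps reproduce the paper's own strategy, and in one place improve on it. Like the paper, you use strict concavity of the von Neumann entropy to force every member of the optimal $E_a$-decomposition of $\rho_{AC}$ to have qubit marginal $\rho^i_A=\rho_A$, and from this you conclude $\mathcal{C}^a_{AC}=\mathcal{C}_{A(BC)}$. Your derivation of that equality is actually cleaner: the paper detours through the function $\mathcal{E}$ of Eq.~(\ref{eps}) and its strict monotonicity to show $\mathcal{C}^*_{AC}=\mathcal{C}_{A(BC)}$, and then simply asserts the bound $\mathcal{C}_{A(BC)}\geq\mathcal{C}^a_{AC}$ in its sandwich argument, whereas you obtain both bounds self-containedly --- the lower one because $\rho^i_A=\rho_A$ makes each $\mathcal{C}(\ket{\phi_i}_{AC})$ equal to $2\sqrt{\det\rho_A}=\mathcal{C}_{A(BC)}$, and the upper one from concavity of $\sigma\mapsto\sqrt{\det\sigma}$ on single-qubit states. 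Up to this point your proposal is correct and essentially identical in route to the paper's proof.

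The genuine gap is exactly where you flagged it, and neither of your proposed repairs closes it. The final step needs the implication $\mathcal{C}_{A(BC)}=\mathcal{C}^a_{AC}\Longrightarrow\mathcal{C}_{AB}=0$ for pure states of $\C^2\otimes\C^2\otimes\C^d$; for this, even the one-sided inequality $\mathcal{C}^2_{AB}+(\mathcal{C}^a_{AC})^2\leq\mathcal{C}^2_{A(BC)}$ would suffice, but some such $2\otimes2\otimes d$ statement must be proved. Citing \cite{ys} does not do it: Yu and Song establish Eq.~(\ref{3tangle}) for three qubits only. Nor does your sketch of a ``direct verification'' from the rank-two structure of $\rho_{AC}$: the two-dimensional support of $\rho_{AC}$ sits inside $\C^2\otimes\C^d$ without respecting the $A$--$C$ tensor structure, so it is generally not of the form $\C^2\otimes\C^2$, the spin-flip construction behind the three-qubit identity has no immediate $2\otimes d$ analogue, and $\mathcal{C}^a_{AC}$ is not a function of $\rho_A$ and $\rho_{AB}$ alone. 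This is precisely the point the paper outsources: it invokes Theorem~3 of \cite{CCJKKL}, a reference specifically devoted to the monogamy equality in $2\otimes2\otimes d$ systems, to pass from $\mathcal{C}_{A(BC)}=\mathcal{C}^a_{AC}$ to $\mathcal{C}(\rho_{AB})=0$. So your plan is the right one and parallels the paper's, but as written it rests on a nontrivial external fact that your own fallbacks cannot substitute for.
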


\begin{proof}
Suppose $S(\rho_A)= E_a(\rho_{AC})$, and let
$\rho_{AC}=\sum_{i}p_i\ket{\phi^i}_{AC}\bra{\phi^i}$ be an optimal
decomposition such that,
\begin{eqnarray}
E_a(\rho_{AC})&=&\sum_{i}p_i E(\ket{\phi^i}_{AC})\nonumber\\
              &=&\sum_{i}p_i S(\rho^{i}_A),
\end{eqnarray}
where $\rho^{i}_{A}=\T_{C}(\ket{\phi^i}_{AC}\bra{\phi^i})$ and $\rho_A = \sum_{i}p_i \rho^i_A$.

The concavity of von Neumann entropy says, $S(\rho_{A}) \geq
\sum_{i}p_i S(\rho^{i}_A)$ and the equality holds if and only if
$\rho^i_A$ are  identical for all $i$.  So, by the assumption,
$\rho^i_A$ are  identical for all $i$.

Since $\ket{\phi^i}_{AC}$ is a pure state in $2\otimes d$ system
and its concurrence is
$\mathcal{C}(\ket{\phi^i}_{AC})=2\sqrt{det\rho^i_A}$, we also have
that $\mathcal{C}(\ket{\phi^i}_{AC})$ are identical for all $i$,
say $\mathcal{C}(\ket{\phi^i}_{AC})=\mathcal{C}^*_{AC}$.

Now, we have
\begin{equation}
S(\rho^{i}_A)=E_f(\ket{\phi^i}_{AC})=\mathcal E(\mathcal{C}(\ket{\phi^i}_{AC}))=\mathcal E(\mathcal{C}^*_{AC}),
\end{equation}
for all $i$, and
\begin{eqnarray}
\mathcal E(\mathcal{C}_{A(BC)})&=& E_f(\ket{\psi_{A(BC)}})\nonumber\\
                              &=& S(\rho_A)\nonumber\\
                              &=& \sum_{i}p_i S(\rho^{i}_A)\nonumber\\
                              &=& \sum_{i}p_i \mathcal E(\mathcal{C}(\ket{\phi^i}_{AC}))\nonumber\\
                              &=& \mathcal E(\mathcal{C}^*_{AC}),
\end{eqnarray}
where $\mathcal{C}_{A(BC)}$ is the concurrence of
$\ket{\psi}_{ABC}$ between subsystems $A$ and $BC$, and $\mathcal
E(\cdot)$ is the function in Eq.~(\ref{eps}).

Since $\mathcal E(\cdot)$ is strictly monotone increasing, (the
first derivative $\frac{d}{d\mathcal C}\mathcal E(\mathcal C)$ is
0 at $\mathcal C =0$ and positive elsewhere), we have
\begin{equation}
\mathcal{C}_{A(BC)} = \mathcal{C}^*_{AC},
\end{equation}
therefore
\begin{equation}
\mathcal{C}_{A(BC)} \geq \mathcal C^a_{AC} \geq
\sum_{i}p_i\mathcal C(\ket{\phi^i}_{AC}) =
\mathcal{C}^*_{AC}=\mathcal{C}_{A(BC)},
\end{equation}
and thus,
\begin{equation}
\mathcal{C}_{A(BC)} = \mathcal C^a_{AC}.
\end{equation}

Now, by the Theorem~3 in~\cite{CCJKKL}, we have $\mathcal
C(\rho_{AB})=0$ where $\rho_{AB}$ is a 2-qubit state, which
implies $E_f(\rho_{AB})=0.$
\end{proof}

Any two-qubit state with zero UE is separable by
Theorem~\ref{thm: 22d}, and any two-qubit separable state with
rank less than or equal to two has zero UE by
Corollary~\ref{Cor: 1}. However, the converse of Theorem~\ref{thm:
22d} is not generally true, since Remark~\ref{Rmk: 1} provides us a
two-qubit separable state with non-zero UE.

\subsection{Tripartite Mixed State}
\label{sec: tri mixed}
Since it is known that the EoA is not a bipartite measure nor an {\em
  entanglement monotone}~\cite{gs}, it is not clear yet if there is
any quantitative relation between $E_a(\rho_{AB})$ and
$E_a(\rho_{A(BD)})$ for a tripartite mixed state $\rho_{ABD}$. In
fact, this is equivalent to the quantitative relation between
$E_u^{\leftarrow}(\rho_{AC})$ and
$E_u^{\leftarrow}(\rho_{A(CD)})$. This is because, if we consider a
purification $\ket{\psi}_{ABCD}$ of $\rho_{ABD}$, then any direction
of a quantitative relation between $E_a(\rho_{AB})$ and
$E_a(\rho_{A(BD)})$, say $E_a(\rho_{AB}) \leq E_a(\rho_{A(BD)})$,
would give us
\begin{align}
S(\rho_A)=&E_u^{\leftarrow}(\rho_{A(CD))})+E_a(\rho_{AB})\nonumber\\
=&E_u^{\leftarrow}(\rho_{AC)})+E_a(\rho_{A(BD)})\nonumber\\
\geq& E_u^{\leftarrow}(\rho_{AC)})+E_a(\rho_{AB}),
\end{align}
which implies $E_u^{\leftarrow}(\rho_{AC})\leq E_u^{\leftarrow}(\rho_{A(CD)})$.

In this section, we pay our attentions only to local rank-1
measurements of each subsystems, and we derive a quantitative
relation between localizable entanglement, and UE for
tripartite mixed states.

For $\rho_{ACD}=\T_B \ket{\psi}_{ABCD}\bra{\psi}$, let us define
\begin{equation}
\begin{split}
\widetilde{E}_u^{\leftarrow}(\rho_{A(CD)}) &=
\min_{\{M_x\otimes N_y\}} \left[S(\rho_A)-\sum_{x, y} p_{xy} S(\rho^{xy}_A)\right],\\
\end{split}
\label{unlocal3}
\end{equation}
where $p_{xy}\equiv \T[(I_A\otimes M_x\otimes N_y )\rho_{ACD}]$
is the probability of the outcome $x$ and $y$ on subsystems $C$ and $D$ respectively,
and $\rho^{xy}_A\equiv \T_{CD}[(I_A\otimes M_x\otimes N_y)\rho_{ACD}]/p_{xy}$ is the state
of system $A$ when the outcome were $x$ and $y$.
The minimum in Eq.~(\ref{unlocal3}) is taken over all possible rank-1 measurements
$\{M_x\}$ and $\{N_y\}$ on subsystems $C$ and $D$ respectively.
By definition, we have
\begin{equation}
\widetilde{E}_u^{\leftarrow}(\rho_{A(CD)}) \geq E_u^{\leftarrow}(\rho_{A(CD)}).
\end{equation}
Furthermore, we have the following lemma.
\begin{Lem}
For any tripartite state $\rho_{ACD}$,
\begin{equation}
\widetilde{E}_u^{\leftarrow}(\rho_{A(CD)}) \geq E_u^{\leftarrow}(\rho_{AC}).
\end{equation}
\label{lem1}
\end{Lem}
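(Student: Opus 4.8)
The plan is to compare the two minimizations directly, exploiting that the measurement defining $E_u^{\leftarrow}(\rho_{AC})$ acts only on $C$, whereas the one defining $\widetilde{E}_u^{\leftarrow}(\rho_{A(CD)})$ acts as a product measurement on both $C$ and $D$. First I would fix an optimal rank-1 product measurement $\{M_x\otimes N_y\}$ attaining the minimum in Eq.~(\ref{unlocal3}), so that
\begin{equation}
\widetilde{E}_u^{\leftarrow}(\rho_{A(CD)})=S(\rho_A)-\sum_{x,y}p_{xy}S(\rho^{xy}_A),
\end{equation}
and then discard the $D$-part, retaining $\{M_x\}$ as a rank-1 measurement on $C$ alone. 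Since $\{M_x\}$ is an admissible measurement for the minimization in Lemma~\ref{Lem: puremain1} applied to $\rho_{AC}$, it yields
\begin{equation}
E_u^{\leftarrow}(\rho_{AC})\leq S(\rho_A)-\sum_x p_x S(\rho^x_A),
\end{equation}
where $p_x=\sum_y p_{xy}$ and $\rho^x_A$ is the post-measurement state of $A$ conditioned on outcome $x$ on $C$.

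The key step is the relation between the coarse-grained ensemble $\{p_x,\rho^x_A\}$ and the fine-grained ensemble $\{p_{xy},\rho^{xy}_A\}$. Because $\{N_y\}$ sums to $I_D$, tracing out the $D$-outcome gives
\begin{equation}
p_x\rho^x_A=\T_{CD}[(I_A\otimes M_x\otimes I_D)\rho_{ACD}]=\sum_y p_{xy}\rho^{xy}_A,
\end{equation}
so that $\rho^x_A=\sum_y (p_{xy}/p_x)\rho^{xy}_A$ is a convex combination of the $\rho^{xy}_A$. Concavity of the von Neumann entropy then gives $S(\rho^x_A)\geq\sum_y (p_{xy}/p_x)S(\rho^{xy}_A)$, and summing against $p_x$ yields
\begin{equation}
\sum_x p_x S(\rho^x_A)\geq\sum_{x,y}p_{xy}S(\rho^{xy}_A).
\end{equation}

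Chaining these facts, I would conclude
\begin{equation}
E_u^{\leftarrow}(\rho_{AC})\leq S(\rho_A)-\sum_x p_x S(\rho^x_A)\leq S(\rho_A)-\sum_{x,y}p_{xy}S(\rho^{xy}_A)=\widetilde{E}_u^{\leftarrow}(\rho_{A(CD)}),
\end{equation}
which is the claim. There is no real analytic obstacle here; the only point demanding care is the bookkeeping of the minimizations — one must verify that the marginal $\{M_x\}$ is genuinely an admissible rank-1 measurement for $E_u^{\leftarrow}(\rho_{AC})$ (so that the first inequality, running in the direction of the minimum, is legitimate) and that the additional measurement on $D$ only refines the $A$-ensemble, which is exactly what makes concavity push the average entropy down and the quantity $\widetilde{E}_u^{\leftarrow}$ up.
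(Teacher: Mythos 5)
Your proof is correct and follows essentially the same route as the paper's: fix the optimal product measurement attaining $\widetilde{E}_u^{\leftarrow}(\rho_{A(CD)})$, observe that the marginal $\{M_x\}$ is an admissible rank-1 measurement for the minimization defining $E_u^{\leftarrow}(\rho_{AC})$, and use concavity of the von Neumann entropy via the coarse-graining identity $p_x\rho^x_A=\sum_y p_{xy}\rho^{xy}_A$. The paper's proof carries out exactly these steps, merely arranged as a single chain of (in)equalities.
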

\begin{proof}
For $\rho_{ACD}$, let $\{M_x \}$ and $\{N_y\}$ are the optimal
rank-1 measurements of $C$ and $D$ respectively, such that
\begin{equation}
\widetilde{E}_u^{\leftarrow}(\rho_{A(CD)})=S(\rho_A)-\sum_{x, y} p_{xy} S(\rho^{xy}_A).
\end{equation}
Due to the concavity of von Neumann entropy, we have
\begin{align}
\widetilde{E}_u^{\leftarrow}(\rho_{A(CD)})&=S(\rho_A)-\sum_{x,y} p_{xy} S(\rho^{xy}_A)\nonumber\\
&= S(\rho_A)-\sum_{x} p_{x}\left[\sum_{y}\frac{p_{xy}}{p_{x}} S(\rho^{xy}_A)\right]\nonumber\\
&\geq S(\rho_A)-\sum_{x} p_{x} S\left(\sum_{y}\frac{p_{xy}}{p_{x}} \rho^{xy}_A\right)\nonumber\\
&=S(\rho_A)-\sum_{x} p_{x} S(\rho^{x}_A)\nonumber\\
&\geq E_u^{\leftarrow}(\rho_{AC}),
\end{align}
where
\begin{align}
p_{x}&=\T_{AC}\left[(I_{A}\otimes M_{x})\rho_{AC} \right]\nonumber\\
&=\T_{ACD}\left[(I_{A}\otimes M_{x}\otimes I_{D})\rho_{ACD} \right]\nonumber\\
&=\sum_{y} \T_{ACD}\left[(I_{A}\otimes M_{x}\otimes N_{y})\rho_{ACD} \right]\nonumber\\
&=\sum_{y}p_{xy},
\end{align}

\begin{align}
\rho^x_A&=\frac{1}{p_x}\T_{C}\left[(I_{A}\otimes M_{x})\rho_{AC} \right]\nonumber\\
&=\frac{1}{p_x}\T_{CD}\left[(I_{A}\otimes M_{x}\otimes I_{D})\rho_{ACD} \right]\nonumber\\
&=\sum_{y}\frac{1}{p_x} \T_{CD}\left[(I_{A}\otimes M_{x}\otimes N_{y})\rho_{ACD} \right]\nonumber\\
&=\sum_{y}\frac{p_{xy}}{p_x}\rho^{xy}_A,
\end{align}
and the second inequality is due to the definition of $E_u^{\leftarrow}(\rho_{AC})$.
\end{proof}

Now, we are ready to have the following theorem.

\begin{Thm}
For any tripartite mixed state $\rho_{ABC}$ with a purification $\ket{\psi}_{ABCD}$,
\begin{align}
S(\rho_A)&\geq \widetilde{E}_a(\rho_{AB})+E_u^{\leftarrow}(\rho_{AC}),
\end{align}
where $\widetilde{E}_a(\rho_{AB})$ is the
localizable entanglement~{\em \cite{PVMC}} of $\rho_{AB}$, defined by
\begin{equation}
\widetilde{E}_a(\rho_{AB})=\max_{\{M_x\otimes N_y\}} \sum_{x, y} p_{xy} S(\rho^{xy}_A)
\end{equation}
over all possible rank-1 measurements
$\{M_x\}$ and $\{N_y\}$ on subsystems $C$ and $D$ respectively.
\label{Thm: 3mix}
\end{Thm}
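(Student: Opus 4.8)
The plan is to exploit the complementary structure relating the restricted localizable entanglement $\widetilde{E}_a(\rho_{AB})$ to the restricted one-way unlocalizable entanglement $\widetilde{E}_u^{\leftarrow}(\rho_{A(CD)})$, and then to invoke Lemma~\ref{lem1} to pass from the four-party quantity to the three-party one. The nontrivial content has in fact already been isolated in Lemma~\ref{lem1}, so I expect the theorem to follow almost immediately.

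First I would observe that $\widetilde{E}_a(\rho_{AB})$ and $\widetilde{E}_u^{\leftarrow}(\rho_{A(CD)})$ are optimizations over the \emph{same} family of measurements, namely the rank-1 product measurements $\{M_x\otimes N_y\}$ acting on $C$ and $D$. Since $S(\rho_A)$ is independent of the measurement, minimizing $S(\rho_A)-\sum_{x,y}p_{xy}S(\rho_A^{xy})$ over this family is the same as subtracting from $S(\rho_A)$ the maximum of $\sum_{x,y}p_{xy}S(\rho_A^{xy})$. This gives the exact identity
\[
S(\rho_A)=\widetilde{E}_a(\rho_{AB})+\widetilde{E}_u^{\leftarrow}(\rho_{A(CD)}),
\]
directly from the definitions in Eqs.~(\ref{unlocal3}) and in the statement of Theorem~\ref{Thm: 3mix}.

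The remaining step is to apply Lemma~\ref{lem1}, which asserts $\widetilde{E}_u^{\leftarrow}(\rho_{A(CD)})\geq E_u^{\leftarrow}(\rho_{AC})$. Substituting this lower bound into the identity above yields
\[
S(\rho_A)\geq\widetilde{E}_a(\rho_{AB})+E_u^{\leftarrow}(\rho_{AC}),
\]
which is precisely the claim.

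I do not anticipate a serious obstacle, because the genuinely nontrivial estimate---that merging and tracing out one of the two local measurement records can only lower the unlocalizable entanglement---is exactly what Lemma~\ref{lem1} delivers via concavity of the von Neumann entropy. The only point demanding care is bookkeeping: I must confirm that the measurement families in the definitions of $\widetilde{E}_a(\rho_{AB})$ and $\widetilde{E}_u^{\leftarrow}(\rho_{A(CD)})$ genuinely coincide (both being rank-1 product measurements on $C$ and $D$), so that the $\max$ and the $\min$ run over the identical set and the complementary relation is an exact equality rather than a one-sided bound.
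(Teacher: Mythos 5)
Your proposal is correct and follows exactly the paper's own route: the paper likewise rewrites the definition of $\widetilde{E}_u^{\leftarrow}(\rho_{A(CD)})$ as the exact complementary identity $\widetilde{E}_u^{\leftarrow}(\rho_{A(CD)})=S(\rho_A)-\widetilde{E}_a(\rho_{AB})$ (both optimizations running over the same rank-1 product measurements on $C$ and $D$) and then invokes Lemma~\ref{lem1} to conclude. No gap; the bookkeeping point you flag is precisely what the paper's one-line rewriting of Eq.~(\ref{unlocal3}) establishes.
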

\begin{proof}

Eq.~(\ref{unlocal3}) can be rewritten as
\begin{align}
\widetilde{E}_u^{\leftarrow}(\rho_{A(CD)})&=S(\rho_A)-\max_{\{M_x\otimes N_{y} \}}
\sum_{x,y} p_{xy} S(\rho^{xy}_A)\nonumber\\
&=S(\rho_A)-\widetilde{E}_a(\rho_{AB}),
\label{unlocal4}
\end{align}
and Lemma~\ref{lem1} completes the proof.
\end{proof}

Theorem~\ref{Thm: 3mix} can be considered as an alternative of
Lemma~\ref{Lem: puremain1} for mixed states case. Furthermore,
Theorem~\ref{Thm: 3mix} together with Lemma~\ref{Lem: puremain1}
give us the following simple corollary.

\begin{Cor}
For any tripartite mixed state $\rho_{ABC}$ with a purification $\ket{\psi}_{ABCD}$,
\begin{equation}
E_a(\rho_{A(BC)})\geq \widetilde{E}_a(\rho_{AB}).
\end{equation}
\label{cor2}
\end{Cor}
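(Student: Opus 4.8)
The plan is to convert the assistance quantity $E_a(\rho_{A(BC)})$ into an unlocalizable-entanglement quantity evaluated on the four-partite purification $\ket{\psi}_{ABCD}$, and then feed it into Theorem~\ref{Thm: 3mix}. The bridge is Lemma~\ref{Lem: puremain1}: reading $\ket{\psi}_{ABCD}$ as a purification of $\rho_{AD}$ whose purifying system is $BC$, the defining relation $E_u^{\leftarrow}(\rho_{XY})=S(\rho_X)-E_a(\rho_{XZ})$ with $X=A$, $Y=D$, $Z=BC$ gives $E_u^{\leftarrow}(\rho_{AD})=S(\rho_A)-E_a(\rho_{A(BC)})$, that is, $E_a(\rho_{A(BC)})=S(\rho_A)-E_u^{\leftarrow}(\rho_{AD})$. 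Hence it suffices to establish the bound $S(\rho_A)\geq\widetilde{E}_a(\rho_{AB})+E_u^{\leftarrow}(\rho_{AD})$.

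First I would obtain exactly this bound. The key observation is that the localizable entanglement $\widetilde{E}_a(\rho_{AB})$, and with it the identity $\widetilde{E}_u^{\leftarrow}(\rho_{A(CD)})=S(\rho_A)-\widetilde{E}_a(\rho_{AB})$ of Eq.~(\ref{unlocal4}), is completely symmetric under interchanging the two measured systems $C$ and $D$, since its optimisation runs over product measurements $\{M_x\otimes N_y\}$ on $CD$. Therefore the coarse-graining step in the proof of Lemma~\ref{lem1} may be run by summing out the $C$-outcomes rather than the $D$-outcomes: starting from the optimal $\{M_x\otimes N_y\}$ and applying concavity of the von Neumann entropy to $\rho_A^y:=\sum_x (p_{xy}/q_y)\,\rho_A^{xy}$, with $q_y=\sum_x p_{xy}$, yields $\widetilde{E}_u^{\leftarrow}(\rho_{A(CD)})\geq E_u^{\leftarrow}(\rho_{AD})$, the $C\leftrightarrow D$ companion of Lemma~\ref{lem1}. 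Combining this with Eq.~(\ref{unlocal4}) gives precisely the companion of Theorem~\ref{Thm: 3mix}, namely $S(\rho_A)\geq\widetilde{E}_a(\rho_{AB})+E_u^{\leftarrow}(\rho_{AD})$.

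Putting the two pieces together, $E_a(\rho_{A(BC)})=S(\rho_A)-E_u^{\leftarrow}(\rho_{AD})\geq\widetilde{E}_a(\rho_{AB})$, which is the claim.

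The main obstacle here is not any sharp estimate but the purification bookkeeping: one must keep straight that the UE term standing opposite $E_a(\rho_{A(BC)})$ through Lemma~\ref{Lem: puremain1} is $E_u^{\leftarrow}(\rho_{AD})$ (whose complement is $BC$), and \emph{not} the $E_u^{\leftarrow}(\rho_{AC})$ appearing in the literal statement of Theorem~\ref{Thm: 3mix} (which instead pairs with $E_a(\rho_{A(BD)})$). Recognising the $C\leftrightarrow D$ symmetry of $\widetilde{E}_a(\rho_{AB})$ is exactly what makes the required form of the theorem available. Equivalently, one can bypass Theorem~\ref{Thm: 3mix} altogether and argue in one stroke: the $D$-marginal $\{N_y\}$ of the optimal product measurement induces a pure-state decomposition $\{q_y,\ket{\psi_{ABC}^y}\}$ of $\rho_{A(BC)}$ whose average $A$-entropy $\sum_y q_y S(\rho_A^y)$, by the same concavity step, already dominates $\widetilde{E}_a(\rho_{AB})$ while being at most $E_a(\rho_{A(BC)})$.
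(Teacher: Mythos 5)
Your proof is correct and follows essentially the same route as the paper: it combines Theorem~\ref{Thm: 3mix} (in its $C\leftrightarrow D$-relabelled form) with the pure-state identity $S(\rho_A)=E_a(\rho_{A(BC)})+E_u^{\leftarrow}(\rho_{AD})$ obtained from Lemma~\ref{Lem: puremain1} for the partition $A$--$BC$--$D$. The only difference is that you explicitly justify the relabelling step, by rerunning the coarse-graining of Lemma~\ref{lem1} over the $C$-outcomes to get $\widetilde{E}_u^{\leftarrow}(\rho_{A(CD)})\geq E_u^{\leftarrow}(\rho_{AD})$, whereas the paper invokes this $C\leftrightarrow D$ symmetry implicitly when it writes Theorem~\ref{Thm: 3mix} with $E_u^{\leftarrow}(\rho_{AD})$ in place of $E_u^{\leftarrow}(\rho_{AC})$.
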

\begin{proof}
By Theorem~\ref{Thm: 3mix}, we have
\begin{equation}
S(\rho_A) \geq \widetilde{E}_a(\rho_{AB})+E_u^{\leftarrow}(\rho_{AD})
\end{equation}
for any pure state $\ket{\psi}_{ABCD}$, whereas
\begin{equation}
S(\rho_A)=E_a(\rho_{A(BC)})+E_u^{\leftarrow}(\rho_{AD}),
\end{equation}
for the tripartite partition $A-BC-D$.
\end{proof}

\section{Conclusion}
\label{sec: Conclusion}

In this paper, we have proposed the concept of UE, and shown that the
polygamous nature of distributed quantum entanglement in multipartite systems is
strongly due to this unlocalizable character. As the mathematical
interpretation for this polygamous nature of quantum entanglement, we
have established polygamy inequalities of entanglement in tripartite
quantum systems with arbitrary dimension, and multi-qubit systems.  We
have also provided several trade offs between UE and other
correlations such as EoA, and localizable entanglement.

This is the first result where polygamous property of quantum
entanglement in multipartite higher-dimensional quantum systems is
provided.  Furthermore, the proposed inequalities are in terms of the
entropic entanglement measures such as entropy of entanglement for
pure states and EoA.  In other words, the proposed polygamy inequalities
of distributed entanglement have been shown in terms of the actual quantification
of entanglement with operational meanings, rather than using other
entanglement measures such as concurrence.

\section*{Acknowledgments}
JSK would like to thank Soojoon Lee for useful discussion, and acknowledges the support
from iCORE, MITACS (QIP project) and US Army.
GG acknowledges financial support from NSERC and MITACS-QIP.


\end{document}